\numberwithin{equation}{section}
\renewcommand{\@cfttocfinish}{}
\newtheorem{theorem}{Theorem}
\newtheorem{proposition}[theorem]{Proposition}
\newtheorem{lemma}[theorem]{Lemma}
\newtheorem{corollary}[theorem]{Corollary}
\newtheorem{problem}{Problem}
\theoremstyle{definition}
\newtheorem{definition}{Definition}
\newtheorem{remark}{Remark}
\def\be{\begin{eqnarray}}
\def\ee{\end{eqnarray}}
\DeclareMathOperator*{\argmin}{argmin}
\DeclareMathOperator*{\argmax}{argmax}
\DeclareMathOperator{\tr}{Tr}
\DeclareMathOperator{\poly}{poly}
\DeclareMathOperator{\re}{Re}
\DeclareMathOperator{\im}{Im}
\DeclareMathOperator{\supp}{supp}
\definecolor{Pr}{rgb}{0.4,0.3,0.9}
\begin{document}

\title{The Quantum Esscher Transform}

\author{Yixian Qiu}
\email{yixian_qiu@u.nus.edu}
\affiliation{Centre for Quantum Technologies, National University of Singapore, 3 Science Drive 2, Singapore 117543}

\author{Kelvin Koor}
\email{kelvinkoor@u.nus.edu}
\affiliation{Centre for Quantum Technologies, National University of Singapore, 3 Science Drive 2, Singapore 117543}

\author{Patrick Rebentrost}
\email{patrick@comp.nus.edu.sg}
\affiliation{Centre for Quantum Technologies, National University of Singapore, 3 Science Drive 2, Singapore 117543}
\affiliation{Department of Computer Science, National University of Singapore, 13 Computing Drive, Singapore 117417}

\begin{abstract} 
The Esscher Transform is a tool of broad utility in various domains of applied probability. It provides the solution to a constrained minimum relative entropy optimization problem. In this work, we study the generalization of the Esscher Transform to the quantum setting. We examine a relative entropy minimization problem for a quantum density operator, potentially of wide relevance in quantum information theory. The resulting solution form motivates us to define the \textit{quantum} Esscher Transform, which subsumes the classical Esscher Transform as a special case. Envisioning potential applications of the quantum Esscher Transform, we also discuss its implementation on fault-tolerant quantum computers. Our algorithm is based on the modern techniques of block-encoding and quantum singular value transformation (QSVT). We show that given block-encoded inputs, our algorithm outputs a subnormalized block-encoding of the quantum Esscher Transform within accuracy $\epsilon$ in $\tilde O(\kappa d \log^2 1/\epsilon)$ queries to the inputs, where $\kappa$ is the condition number of the input density operator and $d$ is the number of constraints.
\end{abstract}

\maketitle

\tableofcontents

\section{Introduction}
In probability and statistics, it is often important to find low relative-entropy distributions from a given fixed distribution. In addition, further constraints, the form and interpretation of which depend on the problem at hand, are frequently imposed on the target distribution. 
An interesting example is the following: consider the process of inferring probability distributions from a set of measurement data. The available data play the role of the constraints---they put restrictions on what the true distribution could be---but these data may not suffice to uniquely determine a probability distribution. In this situation, a common approach is to invoke Jaynes' maximum entropy principle (MaxEnt) \cite{jaynes1957information}. In essence, MaxEnt advocates that the selected distribution be the one that simultaneously maximizes entropy and satisfies the given constraints.

However, the situation becomes more nuanced if we already possess some knowledge of the system, say a prior distribution. In such cases, there is a more refined strategy: the minimum relative entropy principle. As expounded in \cite{shore1980axiomatic,olivares2007quantum, zorzi2013minimum}, this principle, regarded as a generalization of MaxEnt, operates by minimizing the distinguishability (characterized by the relative entropy) between the prior distribution and the distribution to be selected, while respecting the imposed constraints. This systematic approach to incorporating new data makes it fundamental in Bayesian statistics. The updating procedure results in the posterior distribution which reflects the most current understanding of the system in light of the observed data.

When the measurement data is presented in the form of expectation values of selected random variables, the solution to the corresponding relative entropy minimization problem takes the form known as an \textit{Esscher Transform}. Named after Swedish mathematician and economist Fredrik Esscher, who introduced the concept in 1932 in his work on risk theory \cite{escher1932probability}, the Esscher Transform, also known as `exponential tilting' in statistics, and its various extensions have since then found many applications beyond minimizing relative entropy. Notable examples include option pricing (in mathematical finance) \cite{gerber1993option}, importance sampling (for rare-event simulation) \cite{siegmund1976importance} and L\'{e}vy processes (in financial economics) \cite{hubalek2006esscher}. More recently, it has also made inroads into machine learning \cite{beirami2023tilted}, in the context of empirical risk minimization.

In this paper, we discuss the extension of the above problem to the quantum setting. We consider the following optimization problem:
\be
\text{minimize}_{\sigma \geq 0} && \;\;S(\sigma\|\rho) \\\nonumber
\text{s.t.} && \;\;\tr (\sigma H_i) = m_i, \quad i \in [d] \\\nonumber
&&  \;\;\tr (\sigma) = 1,
\ee
where $\rho$ is the a priori state and $H_i$, $i \in [d]$ are observables. 
Refer to Definition \ref{Problem:quantum_motivating_problem_statement} for the precise formulation. In the first part of this work, we show the formal solution to this constrained optimization problem. The solution methodology is modelled after its classical predecessor, albeit with added technical intricacies to manage. The form of the corresponding solution then motivates us to define the \textit{quantum} Esscher Transform, see Definition \ref{Definition:quantum_esscher_transform}. The proof of the solution to the optimization problem is found in Theorem \ref{Theorem:partial_solution_to_initial_quantum_problem}.
The quantum Esscher Transform can be viewed as a generalization of the (classical) Esscher Transform, and indeed subsumes the latter as a special case. 
In the second part of this work, with an eye toward potential applications, we discuss the  implementation of the quantum Esscher Transform on fault-tolerant quantum computers. Our algorithm is based on the modern techniques of block-encoding and quantum singular value transformation (QSVT) \cite{gilyen2019quantum,martyn2021grand}. As an input model we consider purifications of the density operator $\rho$ and block-encodings of the operators $H_i$. The main algorithm is Algorithm \ref{Algorithm:QET}, whose complexity is discussed in Theorem \ref{Theorem:Theorem_for_algorithm}.
The quantum Esscher Transform could find applications in quantum analogues of problems in statistics, machine learning, and finance.

\subsection{Preliminaries and notation}
We define the following notations. 
Let $\mathbb{N} = \{1,2,\dots\}$ be the set of positive natural numbers. For $d \in \mathbb{N}$, $[d] = \{1,2,\dots,d\}$. Here $\|\cdot\|$, $\|\cdot\|_1$, $\|\cdot\|_2$ and $\|\cdot\|_T$ refer to the spectral, $l_1$-, $l_2$- and trace norms respectively. 
The symbol $\odot$ denotes component-wise product, e.g. for vectors $(v \odot w)_i = v_iw_i$, for matrices $(A \odot B)_{ij} = A_{ij}B_{ij}$. Throughout this paper, $\log$ will be base $2$. For convenience, when calculus is involved we shall differentiate as if it were base $e$. For a matrix $M$ we write $a \leq M \leq b$ to mean the eigenvalues of $M$ are in $[a,b]$. Thus, $M \geq 0$ means $M$ is positive semidefinite. We denote a Hilbert space by $\mathcal{H}$, $\mathcal{H}_N$ if its dimension $N$ is to be explicitly specified, the set of linear operators on $\mathcal{H}$ by $\mathcal{L}(\mathcal{H})$, and the set of density operators on $\mathcal{H}$ by $\mathcal{D}(\mathcal{H})$. Let $A \in \mathcal{L}(\mathcal{H})$. The kernel of $A$ is $\ker(A):=\{\ket{\psi} \in \mathcal{H}: A\ket{\psi}=0 \}$ and the support of $A$ is $\supp(A):=\ker(A)^\perp$. Note that $\ker(A) \oplus \supp(A) = \mathcal{H}$. $I_n$ denotes the $n$-qubit identity operator, i.e. it is of size $2^n \times 2^n$. We use $\tilde{O}(\cdot)$ to hide polylog factors, i.e., $\tilde{O}(f(n)) := O(f(n)\cdot {\rm polylog}(f(n)))$. We use $A:=B$ to define expression $A$ in terms of $B$.

A probability space is denoted by $(\Omega,\Sigma,P)$, where $\Omega$ is the sample space, $\Sigma$ is the $\sigma$-algebra over $\Omega$, and $P$ is the probability measure on $\Sigma$. While all the discussions in our work are well-defined for general probability spaces, for our purposes we shall restrict our discussion to finite sample spaces, i.e., $|\Omega|<\infty$, and set $\Sigma=2^\Omega$. In this setting, $P$ can be viewed as a $|\Omega|$-dimensional vector residing in the hypercube $[0,1]^{|\Omega|} \subseteq \mathbb{R}^{|\Omega|}$, with components $P(\omega)$, $\omega \in \Omega$ and normalization $\sum_{\omega\in\Omega} P(\omega) = 1$. Note that technically, a probability measure $P$ is a function on the $\sigma$-algebra $\Sigma$, not $\Omega$. Since we are dealing with a finite sample space here, knowing $P(\{\omega\})$ for all $\omega \in \Omega$ gives us full knowledge of $P$, from the additivity property of measures. Thus we can and shall simply view $P$ as a function on $\Omega$ and write $P(\omega)$ in place of $P(\{\omega\})$. Finally, given probability measures $P$ and $Q$, we say $Q$ is absolutely continuous with respect to $P$ (written $Q \ll P$) if $P(\omega)=0 \implies Q(\omega)=0$ for all $\omega$.

\begin{figure}
    \centering
    \includegraphics[width=80mm]{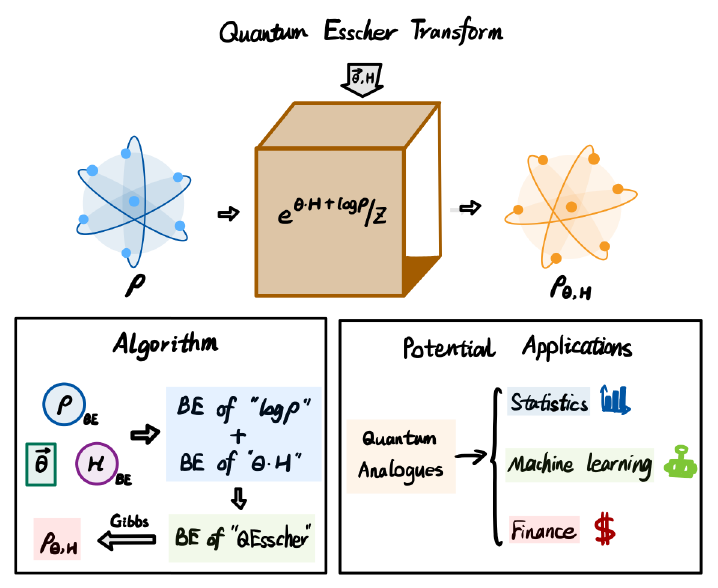}
    \caption{\textbf{Overview of the paper:} Given an input density operator $\rho$, constraint observables $H_i$ and parameters $\theta_i$, $i\in [d]$, the quantum Esscher Transform of $\rho$ is given by $\sigma:=e^{\theta \cdot H + \log \rho}/Z$, where $Z= \tr(e^{\theta \cdot H + \log \rho})$ is the normalization factor. Our algorithm \textsc{QEsscher} is based on block-encodings and QSVT. It takes as inputs block-encodings of \(\rho\), \(H_i\), and the parameter vector \(\vec{\theta}\) and outputs the block-encoding of $\sigma$. The state $\sigma$ itself can also be obtained using Gibbs state preparation techniques in a subsequent step. Finally, we envision potential applications of the quantum Esscher Transform in quantum analogues of problems in statistics, machine learning, and finance.}
    \label{fig:overview}
\end{figure}

\section{Quantum Esscher Transform}

\subsection{Esscher Transform}
The Esscher Transform was first defined by F. Esscher in his work on risk theory \cite{escher1932probability}. 
Let $f: E \longrightarrow \mathbb{R}$ be a probability mass function, where $E \subset \mathbb{R}^d$ and $\theta \in \mathbb{R}^d$. The function
$
f_\theta(x) := \frac{e^{\theta\cdot x}f(x)}{\sum_{x \in E} e^{\theta\cdot x}f(x)}$
is also a probability mass function, and it is called the \textit{Esscher Transform} of $f$ with parameter $\theta$. We can replace probability mass functions with probability density functions (accordingly, $\sum \longrightarrow \int$). 

The Esscher Transform is a map from and onto the space of probability mass/density functions, as $\mathcal{E}(f; \theta) = f_\theta$.
In this work, we never invoke $\mathcal{E}$ and simply call $f_\theta$ the Esscher Transform of $f$, in the same spirit as the Fourier Transform.
In the context of probability theory, let $(\Omega, \Sigma, P)$ be a probability space and  $X : \Omega \longrightarrow \mathbb{R}^d$ a random $d$ dimensional vector. 
This setting motivates the equivalent definition (see Remark \ref{Remark:Equivalence_classical_ET} below) of Esscher Transforms for \textit{measures/distributions}.

\begin{definition}[Esscher Transform for probability distributions]\label{Definition:esscher_transform_for_probability_distributions}
Given a probability distribution $P$ on a finite sample space $\Omega$, a random variable $X: \Omega \longrightarrow \mathbb{R}^d$ and $\theta \in \mathbb{R}^d$. The probability distribution
\begin{align*}
P_{\theta,X}(\omega) := \frac{e^{\theta\cdot X(\omega)}P(\omega)}{\mathbb{E}_P[e^{\theta\cdot X}]}
\end{align*}
is called the Esscher Transform of $P$ with parameter $\theta$, with respect to $X$. For brevity, we say $P_{\theta,X}$ is the $(\theta,X)$-Esscher Transform of $P$.
\end{definition}

This definition is connected to the following problem.  Fix $m \in \mathbb{R}^d$. When and how can we derive from $P$ another probability measure $Q$ such that the expectation of $X$ with respect to $Q$, $\mathbb{E}_Q[X]$ is equal to $m$? Among such probability measures, if they exist, how can we find the one that is closest (in some sense) to $P$? Take as a measure of closeness the relative entropy between $P$ and  $Q$,
$$
D(Q\|P)=\sum_{\omega \in \Omega}Q(\omega)\log \frac{Q(\omega)}{P(\omega)}.
$$
The definition of $D(Q\|P)$ requires that $Q$ be absolutely continuous with respect to $P$, otherwise $D(Q\|P)=\infty$. Without loss of generality, we can assume $P$ is strictly positive on $\Omega$. If this were not so, then let $S \subset \Omega$ denote the subset on which $P=0$. Since $Q$ is absolutely continuous w.r.t. $P$, we have 
$D(Q\|P) = \sum_{\omega \in \Omega \setminus S}Q(\omega)\log \frac{Q(\omega)}{P(\omega)}$,
so we are reduced to an `effective $\Omega$' on which $P$ is strictly positive. The aforementioned question can then be cast as an optimization problem with multiple constraints:
\be
\label{classical_motivating_problem_statement}
\text{minimize}_{Q \in [0,1]^{|\Omega|}} \;\;&& D(Q\|P)\\\nonumber
\text{s.t.} \;\;&& \mathbb{E}_Q[X_i] = m_i, \quad i\in[d]\\\nonumber
&& \sum_{\omega\in \Omega}Q(\omega)=1. 
\ee
Note that there are $d+1$ constraints on $Q$, hence in feasible, non-redundant cases we have $d +1 \leq |\Omega|$, or equivalently $d<|\Omega|$. We have the following solution to the optimization problem. 

\begin{theorem}[]\label{Theorem:solution_to_classical_problem}
Given a random vector $X: \Omega \longrightarrow \mathbb{R}^d$ and $m \in \mathbb{R}^d$ where $\min_{\omega \in \Omega} X_i(\omega) < m_i < \max_{\omega \in \Omega} X_i(\omega)$ for $i \in [d]$ where $d<|\Omega|$. There exists a unique solution $Q^\star$ to problem \ref{classical_motivating_problem_statement}, given by
\begin{align*}
Q^\star=\frac{e^{\lambda^\star \cdot X}P}{\mathbb{E}_P[e^{\lambda^\star \cdot X}]},
\end{align*}
where $\lambda^\star:= \argmin_{\lambda \in \mathbb{R}^d} \mathbb{E}_P[e^{\lambda \cdot (X-m)}]$. Thus $Q^\star$ is the $(\lambda^\star,X)$-Esscher Transform of $P$, see Definition \ref{Definition:esscher_transform_for_probability_distributions}.
\end{theorem}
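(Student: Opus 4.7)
The plan is to treat the problem as a convex program and attack it with Lagrangian duality: $Q\mapsto D(Q\|P)$ is strictly convex on the probability simplex, and the constraints are affine, so the KKT conditions are both necessary and sufficient. I would introduce multipliers $\lambda\in\mathbb{R}^d$ for the expectation constraints and $\mu\in\mathbb{R}$ for the normalization constraint, form the Lagrangian
\[
L(Q,\lambda,\mu)=\sum_\omega Q(\omega)\log\frac{Q(\omega)}{P(\omega)}-\sum_i\lambda_i\bigl(\mathbb{E}_Q[X_i]-m_i\bigr)-\mu\Bigl(\sum_\omega Q(\omega)-1\Bigr),
\]
and read off the stationarity condition $\partial L/\partial Q(\omega)=\log(Q(\omega)/P(\omega))+1-\lambda\cdot X(\omega)-\mu=0$. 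This forces the Gibbs form $Q(\omega)=P(\omega)\,e^{\lambda\cdot X(\omega)+\mu-1}$; imposing $\sum_\omega Q(\omega)=1$ eliminates $\mu$ and produces the Esscher-Transform shape $Q(\omega)=e^{\lambda\cdot X(\omega)}P(\omega)/\mathbb{E}_P[e^{\lambda\cdot X}]$.

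The second step is to pin down $\lambda$ by enforcing the $d$ expectation constraints $\mathbb{E}_Q[X_i]=m_i$. Substituting the ansatz and factoring $e^{-\lambda\cdot m}$ out of numerator and denominator, these become $\mathbb{E}_P[(X_i-m_i)\,e^{\lambda\cdot(X-m)}]=0$, which are exactly the components of $\nabla_\lambda F(\lambda)$ with $F(\lambda):=\mathbb{E}_P[e^{\lambda\cdot(X-m)}]$. Because $F$ is a positively-weighted sum of exponentials of affine functions of $\lambda$, it is convex on $\mathbb{R}^d$, so any critical point is automatically a global minimizer; hence the constraint-feasible $\lambda$ is precisely $\lambda=\argmin_\lambda F(\lambda)$, which is the claimed $\lambda^\star$.

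For uniqueness and existence I would argue as follows. Uniqueness of $Q^\star$ is immediate from strict convexity of $D(\cdot\|P)$ over the convex feasible set. Existence of $\lambda^\star$ follows from coerciveness of $F$: along any ray $\lambda=tu$ with $\|u\|=1$ and $t\to\infty$, the hypothesis $\min_\omega X_i(\omega)<m_i<\max_\omega X_i(\omega)$ guarantees some $\omega$ with $P(\omega)>0$ and $u\cdot(X(\omega)-m)>0$, making the corresponding term of $F(tu)$ blow up; continuity plus convexity then supply an interior minimizer. Global optimality of the resulting $Q^\star$ can finally be repackaged (without re-invoking duality) via the identity
\[
D(Q\|P)-D(Q^\star\|P)=D(Q\|Q^\star)+\sum_\omega\bigl(Q(\omega)-Q^\star(\omega)\bigr)\log\frac{Q^\star(\omega)}{P(\omega)},
\]
whose second term vanishes for any feasible $Q$ because $\log(Q^\star/P)=\lambda^\star\cdot X-\log\mathbb{E}_P[e^{\lambda^\star\cdot X}]$ is affine in $X$ and both $Q,Q^\star$ share the same expectations and mass; the remaining Kullback--Leibler term is nonnegative with equality iff $Q=Q^\star$.

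The main obstacle I anticipate is the coerciveness step. Genuine coerciveness of $F$ demands that $m$ lie in the \emph{interior} of the convex hull of $\{X(\omega):P(\omega)>0\}$, which is in general stronger than the coordinatewise open-interval condition in the hypothesis; turning the latter into a clean direction-by-direction blow-up argument, and correctly handling directions $u$ along which several coordinate contributions cancel, is where one must be most careful. The remaining steps are standard convex-analysis bookkeeping on top of the Lagrangian calculation above.
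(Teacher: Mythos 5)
Your approach mirrors the paper's almost step for step: the Lagrangian stationarity argument yields the exponential-tilting ansatz, the constraint system is recognized as the gradient equation $\nabla_\lambda \mathbb{E}_P[e^{\lambda\cdot(X-m)}]=0$ whose solutions are global minimizers by convexity, strict convexity of $D(\cdot\|P)$ gives uniqueness, and your ``Pythagorean'' identity
\[
D(Q\|P)-D(Q^\star\|P)=D(Q\|Q^\star)+\sum_\omega\bigl(Q(\omega)-Q^\star(\omega)\bigr)\log\frac{Q^\star(\omega)}{P(\omega)}
\]
is exactly the content of the paper's Lemma~\ref{Lemma:minimizing_Q} (which phrases the same split as $D(Q\|P)=D(Q\|Q_\lambda)+\sum_\omega Q(\omega)\log\tfrac{Q_\lambda(\omega)}{P(\omega)}$ and bounds the second term by $\lambda\cdot m-\log\mathbb{E}_P[e^{\lambda\cdot X}]$). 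So the route is the same.

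The obstacle you flag at the end is a real one, and to your credit the paper is also silent about it. The coordinatewise hypothesis $\min_\omega X_i(\omega)<m_i<\max_\omega X_i(\omega)$ does \emph{not} put $m$ in the relative interior of $\conv\{X(\omega):P(\omega)>0\}$, which is what the coerciveness (and indeed even feasibility of the constraint set $M$) actually needs. For instance take $\Omega=\{1,2\}$, $X(1)=(0,0)$, $X(2)=(1,1)$, $P$ uniform, and $m=(\tfrac12,\tfrac14)$: each coordinate of $m$ is strictly between the extremes, but no probability measure $Q$ on $\Omega$ satisfies $\mathbb{E}_Q[X]=m$, and $\mathbb{E}_P[e^{\lambda\cdot(X-m)}]$ is not coercive along the direction $u=(1,-1)/\sqrt{2}$. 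The paper's proof establishes optimality and uniqueness \emph{given} that a $\lambda'$ with $Q_{\lambda'}\in\Lambda\cap M$ exists (Lemma~\ref{Lemma:minimizing_Q} is explicitly conditional on this), but it never proves existence under the stated hypothesis, so the theorem statement is slightly stronger than what is actually shown. You identified that honestly; filling it would require strengthening the hypothesis to $m\in\mathrm{relint}\,\conv\{X(\omega):P(\omega)>0\}$.
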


The proof is elaborated in Appendix \ref{Appendix:Complete_solution_lambda}. Let us comment on a subtlety. Above, we have called $Q^\star$ the Esscher Transform of $P$. Recall that the Esscher Transform as originally defined by Esscher pertains to probability mass/density functions instead of measures. In Remark \ref{Remark:Equivalence_classical_ET}, we show that using the same terminology for probability measures is well-justified, at least for the case when $\Omega$ is discrete.

\subsection{Quantum version}

\textit{Problem statement} $-$
Many concepts in classical probability theory have meaningful generalizations in quantum theory. For example, sample spaces, probability distributions and random variables find their respective counterparts in Hilbert spaces, density operators and observables (the latter also include the former as special instances). The quantum counterpart of the relative entropy is the \textit{quantum relative entropy},
\begin{align*}
S(\sigma\|\rho):= \tr\{\sigma(\log \sigma - \log \rho)\},
\end{align*}
defined for density operators $\sigma,\rho$. As in the classical case, the definition of $S(\sigma\|\rho)$ imposes constraints on $\sigma$ and $\rho$ in order to have $S(\sigma\|\rho) < \infty$. Namely, $\supp(\sigma)\subseteq \supp(\rho)$ (see Chapter 11, \cite{wilde2013quantum}) or equivalently, $\ker(\rho)\subseteq \ker(\sigma)$. Using terminology from measure theory, if this condition is satisfied we say $\sigma$ is absolutely continuous with respect to $\rho$ ($\sigma \ll \rho$). This is analogous to the absolute continuity between probability distributions in classical probability theory.
Now we formally state the quantized version of Problem \ref{classical_motivating_problem_statement}.
\begin{problem}\label{Problem:quantum_motivating_problem_statement}
Let $\mathcal{H}_N$ be an $N$-dimensional Hilbert space and $\rho \in \mathcal{D}(\mathcal{H}_N)$ be a density operator. For $i \in [d]$ where $d<N^2$, let $H_i$ be an observable with $h_{i,\min}$ and $h_{i,\max}$ denoting its smallest and largest eigenvalue respectively. For $m \in \mathbb{R}^d$ with $h_{i,\min} < m_i < h_{i,\max}$, solve
\be
\text{minimize}_{\sigma \geq 0} \;\;&& S(\sigma\|\rho) \\\nonumber
\text{s.t.} \;\;&& \tr (\sigma H_i) = m_i, \quad i \in [d] \\\nonumber
&&  \tr (\sigma) = 1.
\ee
\end{problem}
Here $h_i$ denotes a generic eigenvalue of $H_i$. Note that because $\sigma,H_i$ are Hermitian, $\tr(\sigma H_i)$ is real. As before, we require $h_{i,\min} < m_i < h_{i,\max}$, otherwise the constraints $\tr(\sigma H_i)=m_i$ cannot be satisfied. Finally, we can assume WLOG that $\|H_i\| \leq 1$. This amounts to dividing the constraint $\tr (\sigma H_i) = m_i$ throughout by $\|H_i\|$ if necessary.

\textit{Solution} $-$
Before considering the solution, let us briefly comment on a few possible concerns.
First, $S(\sigma\|\rho)$ requires taking the logarithm of $\rho$, which poses a problem if $\rho$ is not strictly positive definite. This issue is circumvented if, as mentioned above, $\ker(\rho)\subseteq \ker(\sigma)$. The analysis becomes relatively straightforward if we partition the Hilbert space $\mathcal{H}$ into suitable subspaces and examine $\sigma$ over them separately. To this end, we introduce the following notation. Let $\mathcal{G}$ be a subspace of $\mathcal{H}$. For $A \in \mathcal{L}(\mathcal{H})$, denote $A_{\mathcal{G}} := \Pi_\mathcal{G} A \Pi_\mathcal{G} \in \mathcal{L}(\mathcal{G})$, where $\Pi_\mathcal{G}$ is the projector onto $\mathcal{G}$.

Second, as in the classical case, we hope to solve this optimization problem using Lagrange multipliers. With a fixed $\rho$, $S(\sigma\|\rho)$ is a real-valued function of complex matrices. How do we optimize such functions? In principle we could convert everything into real numbers---$M_N(\mathbb{C}) \cong \mathbb{R}^{2N^2}$, so we could view $S(\sigma\|\rho)$ as a function of $2N^2$ real parameters and implement conventional optimization methods. However, this conversion is generally tedious, and the resulting expression for $S(\sigma\|\rho)$ cumbersome. The `Wirtinger Calculus' provides a relatively simple methodology for the optimization of such functions, through the use of `Wirtinger derivatives'. We state the main definitions and results of this framework in Appendix \ref{Appendix:Wirtinger_Calculus}.

We have the following result, which partially resolves Problem \ref{Problem:quantum_motivating_problem_statement}:
\begin{theorem}\label{Theorem:partial_solution_to_initial_quantum_problem}
The solution to Problem \ref{Problem:quantum_motivating_problem_statement} takes the form
\begin{align}\label{Equation:Solution_form_decomposition}
\sigma^\star = \sigma_{\supp \rho}^\star \oplus \sigma_{\ker \rho}^\star,
\end{align}
where
\begin{align}\label{Equation:Solution_form}
\sigma_{\supp \rho}^\star = \frac{e^{\lambda^\star \cdot H_{\supp \rho} + \log \rho_{\supp \rho}}}{\tr(e^{\lambda^\star \cdot H_{\supp \rho} + \log \rho_{\supp \rho}})} \qquad \text{and} \qquad \sigma_{\ker \rho}^\star = \mathbf{0}.
\end{align}\label{Equation:Getting_lambda}
The optimal values $\lambda^\star \in \mathbb{R}^d$ are to be determined from the constraints 
\begin{align}
\tr\left( e^{\lambda^\star\cdot (H_{\supp \rho}-m) + \log \rho_{\supp \rho}}(H_{i,\supp \rho}-m_i) \right) = 0, i \in [d].
\end{align}
\end{theorem}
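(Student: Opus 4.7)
The plan is to first use absolute continuity $\supp(\sigma)\subseteq\supp(\rho)$ to reduce the optimization to the subspace $\supp(\rho)$, then apply Lagrange multipliers via the Wirtinger calculus of Appendix \ref{Appendix:Wirtinger_Calculus} on the restricted problem, and finally invoke strict convexity to certify global optimality.

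For the reduction step, I would decompose $\mathcal{H}_N = \supp(\rho)\oplus\ker(\rho)$ and observe that any feasible $\sigma$ with finite objective must vanish on $\ker(\rho)$; this already forces $\sigma^\star_{\ker\rho}=\mathbf{0}$ and yields the block structure (\ref{Equation:Solution_form_decomposition}). On $\supp(\rho)$ the restricted operator $\rho_{\supp\rho}$ is strictly positive definite, so $\log\rho_{\supp\rho}$ is well-defined; the objective becomes $S(\sigma_{\supp\rho}\|\rho_{\supp\rho})$ and the linear constraints become $\tr(\sigma_{\supp\rho}H_{i,\supp\rho})=m_i$ together with $\tr\sigma_{\supp\rho}=1$. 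All subsequent analysis is done on this reduced problem.

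Next, I would form the Lagrangian
\[
L(\sigma_{\supp\rho},\lambda,\mu) = S(\sigma_{\supp\rho}\|\rho_{\supp\rho}) - \sum_{i=1}^{d}\lambda_i\bigl(\tr(\sigma_{\supp\rho}H_{i,\supp\rho}) - m_i\bigr) - \mu\bigl(\tr\sigma_{\supp\rho} - 1\bigr),
\]
and compute its Wirtinger gradient in $\sigma_{\supp\rho}$ using the identities $\nabla_\sigma\tr(\sigma\log\sigma) = \log\sigma + I$ and $\nabla_\sigma\tr(\sigma A) = A$ for Hermitian $A$. Setting this gradient to zero yields
\[
\log\sigma_{\supp\rho} = \log\rho_{\supp\rho} + \lambda\cdot H_{\supp\rho} + (\mu-1)I;
\]
exponentiating and absorbing the scalar $e^{\mu-1}$ into the normalization fixed by $\tr\sigma_{\supp\rho}=1$ produces the claimed form (\ref{Equation:Solution_form}). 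Strict convexity of $S(\cdot\|\rho)$ on the relative interior of density operators supported on $\supp(\rho)$, combined with convexity of the affine feasible set, then promotes this stationary point to the unique global minimum.

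Finally, substituting the explicit $\sigma^\star_{\supp\rho}$ back into $\tr(\sigma^\star H_i) = m_i$ gives $\tr\bigl(e^{\lambda^\star\cdot H_{\supp\rho}+\log\rho_{\supp\rho}}(H_{i,\supp\rho}-m_i)\bigr) = 0$, and since $\lambda^\star\cdot m$ is a scalar commuting with everything else in the exponent, $e^{\lambda^\star\cdot(H_{\supp\rho}-m)+\log\rho_{\supp\rho}} = e^{-\lambda^\star\cdot m}\,e^{\lambda^\star\cdot H_{\supp\rho}+\log\rho_{\supp\rho}}$, so the constraint is equivalent to (\ref{Equation:Getting_lambda}). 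The principal technical obstacle I anticipate is the clean justification of the Wirtinger derivative of $\tr(\sigma\log\sigma)$: because $\sigma$ does not commute with its own perturbations, one must route through an integral representation of the matrix logarithm such as $\log\sigma = \int_0^\infty\bigl((1+t)^{-1}I - (\sigma+tI)^{-1}\bigr)\,dt$ and exploit cyclicity of the trace to collapse the result to $\log\sigma + I$. Beyond that, everything reduces to standard manipulations with traces, matrix exponentials, and convex analysis.
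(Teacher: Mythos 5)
Your proposal follows the paper's overall architecture closely: reduce to $\supp(\rho)$ via absolute continuity, extract the stationary form with Wirtinger-calculus Lagrange multipliers, then certify global optimality. The reduction and Lagrangian steps coincide with the paper's Steps 1 and 2 (the paper cites Lemma \ref{Lemma:ker_and_supp_rho} for the block-diagonal decomposition and Propositions \ref{Proposition:Wirtinger_derivative_of_traces}--\ref{Proposition:Wirtinger_hermitian} for exactly the derivative identity you flag as the main obstacle; your sketched detour through the integral representation of $\log$ and cyclicity of the trace would achieve the same thing and is a perfectly legitimate alternative).

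The one genuinely different ingredient is your optimality certificate. You invoke the standard convex-optimization fact that for a strictly convex objective over an affine feasible set, a KKT stationary point is the unique global minimizer. The paper instead proves a self-contained variational inequality, Lemma \ref{Lemma:minimizing_sigma}, a quantum Gibbs/Donsker--Varadhan-type bound
\[
S(\sigma\|\rho) \;\geq\; \sup_{\lambda\in\mathbb{R}^d}\bigl[\lambda\cdot m - \log\tr\bigl(e^{\lambda\cdot H + \log\rho}\bigr)\bigr],
\]
and shows your stationary $\sigma_{\lambda^\star}$ saturates it, with strict convexity used only for uniqueness. Both routes are correct. Yours is shorter and leans on general convex-duality machinery; the paper's Lemma \ref{Lemma:minimizing_sigma} is more elementary (only nonnegativity of quantum relative entropy), mirrors the classical Lemma \ref{Lemma:minimizing_Q} exactly, and, as a bonus, identifies $\lambda^\star$ as $\argmin_\lambda \tr(e^{\lambda\cdot(H-m)+\log\rho})$, a variational characterization that is recycled in the algorithmic discussion later in the paper. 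If you wanted your proof to also deliver that characterization, you would still need something like the paper's Lemma \ref{Lemma:minimizing_sigma} or an explicit strong-duality argument.
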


The proof of Theorem \ref{Theorem:partial_solution_to_initial_quantum_problem} draws inspiration from the classical version but additional technical hurdles need to be overcome, as mentioned above. The details are elaborated in Appendix \ref{Appendix:quantum_solution}.

Motivated by the form of the state $\sigma_{\supp \rho}^\star$ in Theorem \ref{Theorem:partial_solution_to_initial_quantum_problem}, we make the following definition:

\begin{definition}[Quantum Esscher Transform]\label{Definition:quantum_esscher_transform}
Given a density operator $0 < \rho \in \mathcal{D}(\mathcal{H})$, observables $H_i,\;i \in [d]$ and $\theta \in \mathbb{R}^d$. The density operator
\begin{align*}
\rho_{\theta,H} := \frac{e^{\theta\cdot H + \log \rho}}{\tr(e^{\theta\cdot H + \log \rho})}
\end{align*}
is called the $(\theta,H)$-quantum Esscher Transform of $\rho$.
\end{definition}

\begin{remark}
The state $\sigma_{\supp \rho}^\star$ in Theorem \ref{Theorem:partial_solution_to_initial_quantum_problem} is thus a $(\lambda^\star,H_{\supp \rho})$-quantum Esscher Transform of $\rho_{\supp \rho} > 0$. Also note that the quantum Esscher Transform subsumes the classical Esscher Transform as a special case, wherein $\rho,H_i$ are diagonal and thus commute.
\end{remark}

\textit{Connection to quantum imaginary time evolution}\label{Section:Connection_ITE} $-$
Quantum imaginary-time evolution (QITE) is a conceptual tool used to find ground states of Hamiltonians \cite{mcardle2019variational, motta2020determining}. From the real-time Schr\"{o}dinger equation one obtains the imaginary-time version $\frac{\partial |\psi\rangle}{\partial \tau}=-H |\psi\rangle $ by performing a Wick rotation, i.e. setting $\tau=it$. For general mixed states $\rho$, the imaginary-time Liouville-von Neumann equation \cite{berman1991time} is given by
\be\label{Equation:LvN_equation}
\frac{\partial \rho}{\partial \tau}=-\{H,\rho\}+2\langle H\rangle \rho,
\ee 
from which the solution is derived as
\be\label{Equation:QITE_rho}
\rho(\tau)= A(\tau) e^{-\tau H} \rho(0) e^{-\tau H},
\ee
where $A(\tau)=1/\tr(e^{-2\tau H}\rho(0))$ is the normalisation factor. 

In \cite{olivares2007quantum} it was asserted that under certain conditions, namely `when the prior and posterior states are close to each other with respect to the Fisher information metric', the minimizing relative entropy problem could be solved by formally integrating a `quantum trajectory' equation \cite{olivares2007quantum,braunstein1996geometry}. This equation takes on the same form as Eq. \ref{Equation:LvN_equation}, and thus its solution is given by Eq. \ref{Equation:QITE_rho}. 
More specifically, we have
\begin{align*}
    \rho(\theta) = \frac{e^{\theta \cdot H /2}\rho e^{\theta \cdot H /2}}{\tr(e^{\theta \cdot H} \rho)},
\end{align*}
where $\theta$ are the Lagrange multipliers. Here we simply observe that $\rho(\theta)$ resembles the imaginary-time-evolved state in Eq.~(\ref{Equation:QITE_rho}) if $\theta$ is one-dimensional and after making the substitution $\tau=-\theta/2$. Since the quantum Esscher Transform provides an exact solution to Problem \ref{Problem:quantum_motivating_problem_statement}, under the aforementioned conditions presumed by \cite{olivares2007quantum}, we note the connection between the quantum Esscher Transform and QITE.
\begin{figure}
    \centering
    \includegraphics[width=80mm]{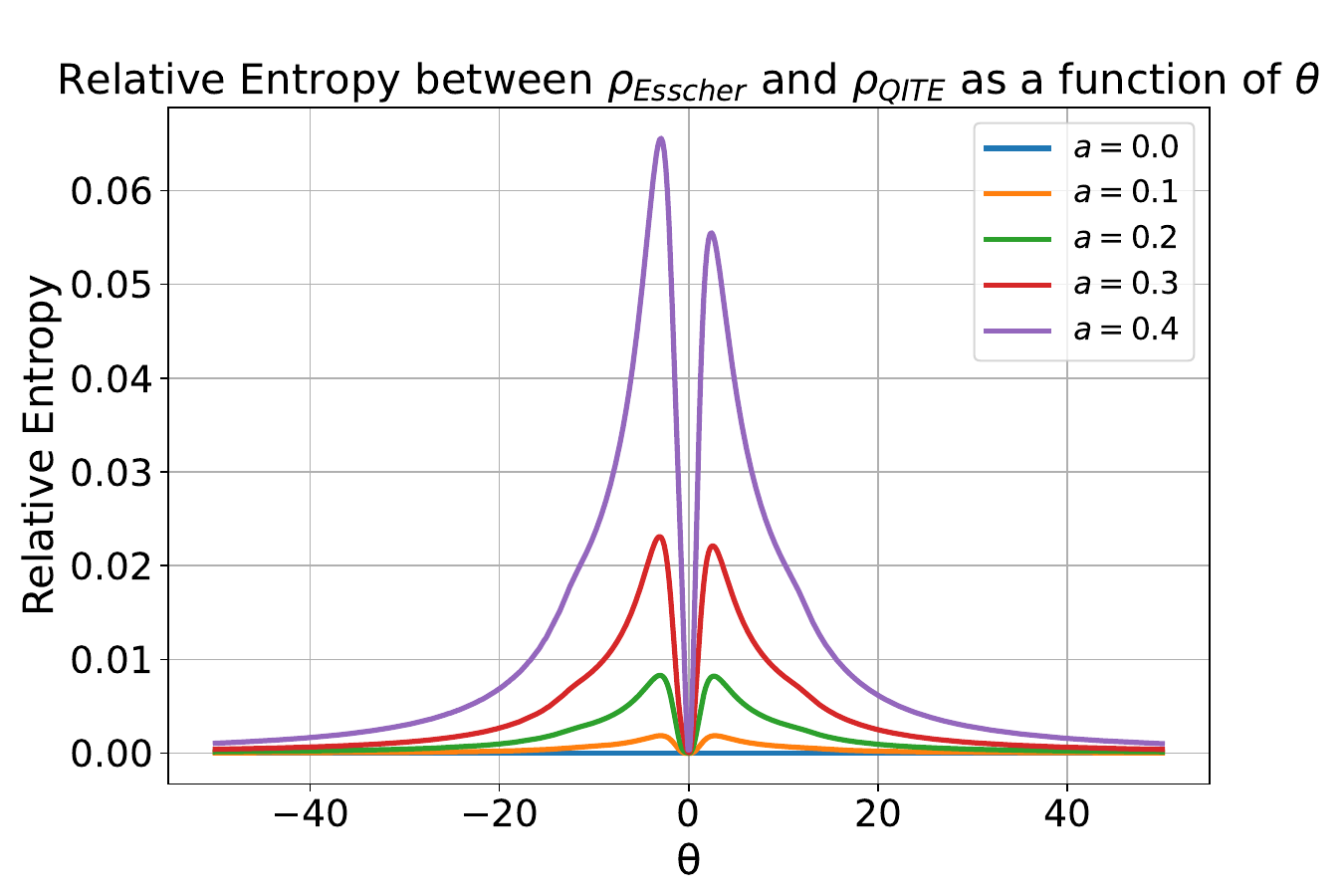}
    \caption{Relative entropy between the quantum Esscher transformed state $\rho_{\text{Esscher}}$ and the imaginary-time evolved $\rho_{\text{QITE}}$ as a function of the parameter $\theta$. Consider a single qubit system, with initial positive state $\rho = 0.5(|0\rangle\langle 0|+|1\rangle\langle 1|)+a (|0\rangle\langle 1| + |1\rangle\langle 0|)$, where we vary $a$ from $0.0$ to $0.4$. We also choose the Hamiltonian $H = \sigma_x + \sigma_z$. The plot above illustrates how the distinguishability of these two states varies with respect to $\theta$.}
     \label{fig:relative_entropy}
\end{figure}

In Fig. \ref{fig:relative_entropy} we provide a simple plot illustrating the difference between $\rho_{\text{Esscher}}$ and the $\rho_{\text{QITE}}$ for a simple initial state $\rho = 0.5(|0\rangle\langle 0|+|1\rangle\langle 1|)+a (|0\rangle\langle 1| + |1\rangle\langle 0|)$, where $a \in[0,0.5)$ is the coherence parameter, and Hamiltonian $H = \sigma_x + \sigma_z$. We use relative entropy as the measure of distinguishability between the two states. We note that as $\theta \rightarrow 0$, the relative entropy vanishes, as is immediate from the definitions. We also note that as $\theta \rightarrow \pm \infty$ the relative entropy vanishes -- because the $\theta \cdot H$ terms then dominate the $\log \rho$ term. In intermediate ranges for $\theta$, the relative entropy is nonzero, and this is interpreted as the error arising when one uses $\rho_{\text{QITE}}$ as a proxy for $\rho_{\text{Esscher}}$. For the dependency on $a$, since $a$ parametrizes the noncommutativity between $\rho$ and $H$, when $a \to 0$ the relative entropy remains small as $\theta$ varies. The relative entropy becomes maximal when $a \rightarrow 0.5$.

Next, we discuss how to implement the quantum Esscher Transform on quantum computers using modern techniques based on block-encodings (BE) and the quantum singular value transformation (QSVT). The relevant tools and techniques of the framework are collated in the appendix.

\section{Implementation on quantum computers}
In this section, we provide a quantum algorithm implementing the quantum Esscher Transform, based on block-encodings and QSVT. We assume the inputs come in the form of block-encodings. Our algorithm outputs the Esscher-transformed state in block-encoded form (and subsequent translations to the physical state itself).

Reference \cite{gilyen2019quantum} demonstrates how to construct block-encodings for density operators $\rho$ within the purified quantum query-access model (see Definition \ref{Definition:Purified_access} and Proposition \ref{Proposition:block_encoding_density_operators} below). For the Hermitian operators $H_i$ which are generally not density operators, their block-encodings can be constructed efficiently for many physical Hamiltonians, or if the $H_i$'s are stored in sparse data structures or KP trees. Along the way we shall also need as an auxiliary tool `state-preparation pairs' (see Definition \ref{Definition:State_preparation_pair}), to prepare linear combinations of the Hamiltonians. We assume immediate access to these, as we do for block-encodings. For the construction of state-preparation pairs, one can refer to \cite{van2018improvements}.

\subsection{Technical lemmas}
The  logarithm of the density matrix $\rho$ is a key ingredient of the quantum Esscher Transform. Here we provide a technical lemma on constructing a block-encoding of the logarithm of a density matrix from the block-encoding of that matrix.

\begin{lemma}[Block-encoding of $\log \rho$]\label{Lemma:step1}
Given $U_\rho$, a $(1,a,0)$-BE of an $n$-qubit density operator $\frac{1}{\kappa} \leq \rho \leq 1$, where $\kappa>1$, and polynomial approximation error tolerance $\varepsilon_\text{poly}>0$. Then we have a $\left( 2(1+\log 2\kappa),\;a+2,\;\varepsilon_\text{poly} \right)$-BE of $\log \rho$, the construction of which makes $\mathcal{O}\left( \kappa \log \left( \frac{\log \kappa}{\varepsilon_{\text{poly}}} \right)\right)$ queries to $U_\rho$.
\end{lemma}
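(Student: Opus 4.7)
The plan is to directly invoke Corollary \ref{Corollary:BE_of_hermitian_matrices} with $f(x) = \log x$, $A = \rho$, and $\alpha = 1$ (since $U_\rho$ is a $(1,a,0)$-BE). The input error $\varepsilon = 0$ will make the $4d\sqrt{\varepsilon/\alpha}$ term in the corollary vanish, leaving $\varepsilon_\text{poly}$ as the sole source of error. The real work is just to verify the analytic hypotheses of Proposition \ref{Proposition:gslw_polynomialapproximation} for the scaled function $x \mapsto \log(\alpha x) = \log x$ on an interval covering $[1/\kappa, 1]$, and to extract the constant $B$ which determines both the subnormalization factor $2(1+B)$ and the query complexity through $d = \mathcal{O}(\delta^{-1}\log(B/\varepsilon_\text{poly}))$.

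To this end, I would center the expansion at $x_0 = \tfrac{\kappa+1}{2\kappa}$ with radius $r = \tfrac{\kappa-1}{2\kappa}$, so that $[x_0-r,\,x_0+r] = [1/\kappa,\,1]$ exactly contains the spectrum of $\rho$. Take $\delta = \tfrac{1}{2\kappa}$, which is safely in $(0,r]$ for $\kappa > 1$ (modulo a trivial separate check when $\kappa$ is very close to $1$) and keeps $x_0 - r - \delta > 0$ so that $\log$ remains analytic. The Taylor series of $\log$ around $x_0$ has coefficients $a_0 = \log x_0$ and $a_l = (-1)^{l+1}/(l\, x_0^l)$ for $l \geq 1$, so
\begin{align*}
\sum_{l=0}^\infty (r+\delta)^l |a_l|
= |\log x_0| + \sum_{l=1}^\infty \frac{1}{l}\Bigl(\frac{r+\delta}{x_0}\Bigr)^l
= \log\frac{2\kappa}{\kappa+1} - \log\Bigl(1 - \tfrac{\kappa}{\kappa+1}\Bigr)
= \log(2\kappa),
\end{align*}
using $(r+\delta)/x_0 = \kappa/(\kappa+1)$. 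Thus one can take $B = \log(2\kappa)$.

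Substituting these into Corollary \ref{Corollary:BE_of_hermitian_matrices} yields a $(2(1+\log 2\kappa),\, a+2,\, \varepsilon_\text{poly} + 2(1+\log 2\kappa)\cdot 4d\sqrt{0/1})$-BE of $\log\rho$, and the second error term is zero because $\varepsilon = 0$, giving exactly the stated error $\varepsilon_\text{poly}$. The query cost is $d = \mathcal{O}(\delta^{-1}\log(B/\varepsilon_\text{poly})) = \mathcal{O}(\kappa \log(\log\kappa/\varepsilon_\text{poly}))$, matching the statement.

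The main (very mild) obstacle is just the book-keeping to ensure the triple $(x_0,r,\delta)$ simultaneously satisfies all of $x_0 \in [-1,1]$, $r \in (0,2]$, $\delta \in (0,r]$, $[1/\kappa,1] \subseteq [x_0 - r, x_0 + r]$, and keeps $x_0 - r - \delta$ bounded away from $0$ so that $\log$ is analytic on $[x_0 - r - \delta, x_0 + r + \delta]$; all of these are routine with the above choice for $\kappa$ not too close to $1$, and the edge case $\kappa \to 1^+$ is degenerate because then $\rho$ is essentially the maximally mixed state and $\log\rho$ is essentially a scalar. I would also briefly note that the hypothesis $\varepsilon_\text{poly} \in (0,\tfrac{1}{2B}]$ in Proposition \ref{Proposition:gslw_polynomialapproximation} can be enforced WLOG by shrinking the given tolerance, which only affects constants hidden in the $\mathcal{O}(\cdot)$.
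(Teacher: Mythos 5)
Your proof is correct and follows essentially the same route as the paper: invoke Corollary \ref{Corollary:BE_of_hermitian_matrices} with $f=\log$, $\alpha=1$, $\varepsilon=0$, and verify the hypotheses of Proposition \ref{Proposition:gslw_polynomialapproximation}. The only difference is the expansion center: the paper takes $x_0=1$, $r=1-\tfrac{1}{\kappa}$ (so $a_0=\log 1=0$), whereas you center at the midpoint $x_0=\tfrac{\kappa+1}{2\kappa}$; both choices give the same $\delta=\tfrac{1}{2\kappa}$, the same bound $B=\log 2\kappa$, and hence identical subnormalization and query complexity. Both parameterizations also share the mild caveat you flag --- the constraint $\delta\le r$ fails for $\kappa$ near $1$ (the paper's choice needs $\kappa\ge 3/2$, yours $\kappa\ge 2$), a degenerate regime the paper does not comment on, so your explicit acknowledgment is if anything slightly more careful.
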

\begin{proof}
First we construct a polynomial approximation of $\log x$. More specifically, we check that the function $\log x$ satisfies the conditions of Proposition \ref{Proposition:gslw_polynomialapproximation}, with the appropriate $x_0,r,\delta$ and $B$. Corollary \ref{Corollary:BE_of_hermitian_matrices} then gives us the desired block-encoding.

The following derivation is based on the proof of Corollary 67, \cite{gilyen2019quantum} and Lemma 11, \cite{gilyen2019distributional}. Negative power functions $x^{-c}$ share with $\log x$ the common property of going to infinity as $x$ approaches $0$, thus the Taylor expansions of these functions are performed about $x=1$. Choose $x_0=1$, $r=1-\frac{1}{\kappa}$ and $\delta=\frac{1}{2\kappa}$. The Taylor series of $\log x$ about $x=1$ is $\log x = \sum_{k=1}^\infty \frac{(-1)^{k+1}}{k}(x-1)^k$. With $a_k= \frac{(-1)^{k+1}}{k}$, the series-of-coefficients bound $B$ in Proposition \ref{Proposition:gslw_polynomialapproximation} is
\begin{eqnarray}
\sum_{k=1}^\infty (r+\delta)^k|a_k| &=& \sum_{k=1}^\infty \frac{(1-\frac{1}{2\kappa})^k}{k} = \sum_{k=1}^\infty \frac{(-1)^k}{k} \left( \frac{1}{2\kappa}-1 \right)^k \nonumber\\
&=& -\log \frac{1}{2\kappa} = \log 2\kappa =: B.
\end{eqnarray}
Corollary \ref{Corollary:BE_of_hermitian_matrices} gives us the unitary $U_{\log \rho}$, which is a $\left( 2(1+\log 2\kappa),\;a+2,\; \varepsilon_\text{poly} \right)$-encoding of $\log \rho$, which can be constructed using $\mathcal{O}\left( \kappa \log \left( \frac{\log \kappa}{\varepsilon_{\text{poly}}} \right)\right)$ queries to $U_\rho$.
\end{proof}
Next, we provide a lemma to construct the block-encoding of an exponentiated matrix from the block-encoding of that matrix.

\begin{lemma}[Block-encoding of $e^H$]\label{Lemma:step3}
Given $U_H$, a $(\alpha,a,\varepsilon)$-BE of $H$ and polynomial approximation error tolerance $\varepsilon_\text{poly}>0$, there is a
$\left( 4,\;a+2,\;\varepsilon_\text{poly} + 16t\sqrt{\varepsilon/\alpha} \right)$-BE of $e^H/e^\alpha$, constructible using $t$ queries to $U_H$. Here
\begin{align*}
t = \mathcal{O}\left(\sqrt{\max (\alpha,\log \frac{1}{\varepsilon_\text{poly}}) \log \frac{1}{\varepsilon_\text{poly}}}\right).
\end{align*}
\end{lemma}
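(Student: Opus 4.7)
The plan is to mirror the structure of Lemma \ref{Lemma:step1}: rescale $H$ so that its spectrum lies in $[-1,1]$, build a real polynomial approximation of the target function on that interval, and then invoke Theorem \ref{Theorem:qet_of_hermitian_matrices} to lift the polynomial approximation into a block-encoding. Since $U_H$ is a $(\alpha,a,\varepsilon)$-BE of $H$, it is equivalently a $(1,a,\varepsilon/\alpha)$-BE of $H/\alpha$, whose eigenvalues lie in $[-1,1]$. Writing $e^H/e^\alpha = g(H/\alpha)$ for $g(y) := e^{\alpha(y-1)}$, which takes values in $[e^{-2\alpha},1]$ on $[-1,1]$, reduces the task to uniformly approximating $g$ on $[-1,1]$ by a real polynomial $P$ of degree $t$ with $\|P\|_{[-1,1]}\leq 2$ and $\|P-g\|_{[-1,1]}\leq \varepsilon_{\text{poly}}$.

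The first step is to produce such a polynomial with the claimed degree. A naive application of Proposition \ref{Proposition:gslw_polynomialapproximation} to the Taylor series of $g$ about $x_0=-1$ with $r=2$ gives series-of-coefficients bound $B=e^{\alpha\delta}$ and degree $O(\alpha+\log(1/\varepsilon_{\text{poly}})/\delta)$; this is linear in $\alpha$ and does not meet the square-root scaling advertised. Instead, one should invoke a dedicated polynomial approximation of the real exponential---either the Jacobi--Anger/modified-Bessel expansion (Corollary~64 of \cite{gilyen2019quantum}) or a Sachdeva--Vishnoi style approximation of $e^{-\alpha(1-y)}$ on $[-1,1]$---which yields a polynomial of degree $O(\sqrt{\alpha\log(1/\varepsilon_{\text{poly}})}+\log(1/\varepsilon_{\text{poly}}))$ while keeping $\|P\|_{[-1,1]}$ bounded by $2$ independently of $\alpha$. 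The two regimes in $t = O(\sqrt{\max(\alpha,\log(1/\varepsilon_{\text{poly}}))\log(1/\varepsilon_{\text{poly}})})$ come out of this degree formula after observing that for $\alpha\lesssim \log(1/\varepsilon_{\text{poly}})$ the $\log(1/\varepsilon_{\text{poly}})$ term dominates, while for larger $\alpha$ the square-root term wins.

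With such a $P$ in hand, set $P' := P/4$ so that $\|P'\|_{[-1,1]}\leq 1/2$, and apply Theorem \ref{Theorem:qet_of_hermitian_matrices} to $H/\alpha$: this produces a $(1,a+2,4t\sqrt{\varepsilon/\alpha})$-BE of $P'(H/\alpha)$, equivalently a $(4,a+2,16t\sqrt{\varepsilon/\alpha})$-BE of $P(H/\alpha)$. Combining the QET error with the polynomial-approximation error $\|P(H/\alpha)-g(H/\alpha)\|\leq \varepsilon_{\text{poly}}$ by the triangle inequality yields the advertised $(4,a+2,\varepsilon_{\text{poly}}+16t\sqrt{\varepsilon/\alpha})$-BE of $g(H/\alpha)=e^H/e^\alpha$. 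The main obstacle is the first step: the Taylor-series machinery of Proposition \ref{Proposition:gslw_polynomialapproximation} is insufficient to achieve the $\sqrt{\alpha}$ degree dependence together with a constant sup-norm bound, so one must bring in a tailored exponential polynomial approximation and track its normalization constants carefully so that the subnormalization stays at $4$ and the error scales linearly in the polynomial degree $t$.
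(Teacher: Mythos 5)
Your proposal is correct and follows essentially the same route as the paper: both rely on Corollary~64 of \cite{gilyen2019quantum} for the low-degree exponential approximation with the $\sqrt{\alpha}$ scaling, obtain the sup-norm bound $\|P\|_{[-1,1]}\leq 2$, rescale by $4$ to satisfy the hypotheses of Theorem~\ref{Theorem:qet_of_hermitian_matrices}, and then combine the QET error with the polynomial-approximation error by the triangle inequality. The only difference is presentational: you apply Theorem~\ref{Theorem:qet_of_hermitian_matrices} directly and track the constant $4$ by hand, while the paper packages the same error bookkeeping as an invocation of Corollary~\ref{Corollary:BE_of_hermitian_matrices} with $B=1$ after substituting the degree from Corollary~64; your explicit remark that Proposition~\ref{Proposition:gslw_polynomialapproximation} alone cannot deliver the $\sqrt{\alpha}$ degree is a useful clarification the paper leaves implicit.
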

\begin{proof}
By Corollary 64, \cite{gilyen2019quantum}, there exists $P \in \mathbb{R}[x]$ of degree $t = \mathcal{O}\left(\sqrt{\max (\alpha,\log \frac{1}{\varepsilon_{\text{poly}}}) \log \frac{1}{\varepsilon_{\text{poly}}}}\right)$ such that $\|\frac{e^{\alpha x}}{e^\alpha} - P(x)\|_{[-1,1]} \leq \varepsilon_{\text{poly}}$. Furthermore $\|P(x)\| \leq \|\frac{e^{\alpha x}}{e^\alpha} - P(x)\|_{[-1,1]}+\|\frac{e^{\alpha x}}{e^\alpha}\|_{[-1,1]} \leq 1+B$, where $B=1$. Applying Corollary \ref{Corollary:BE_of_hermitian_matrices} with $f(x) = \frac{e^x}{e^\alpha}$ gives a $\left( 4,\;a+2,\;\varepsilon_\text{poly} + 16t\sqrt{\varepsilon/\alpha} \right)$-encoding of $e^H/e^\alpha$, making $t$ queries to $U_H$.
\end{proof}

\newpage

\onecolumngrid

\begin{algorithm}[H]\label{Algorithm:QET}
\caption{Quantum Esscher Transform via QSVT -- \textsc{QEsscher($\rho,H,\theta$)}}
\begin{algorithmic}[1]
\Require 
\Statex - Unitary $O_\rho$ preparing the purification of the $n$-qubit density operator $\frac{1}{\kappa} \leq \rho \leq 1$ using $n_\rho$ ancillary qubits
\Statex - Quantum circuits $U_j$ which are $(1,a,\varepsilon_{\text{BE}})$-BEs of $H_j$ for $j \in [d]$, where $\varepsilon_\text{BE} = \left( \frac{\varepsilon}{8 \log \frac{1}{\varepsilon}} \right)^2$
\Statex - Parameters $\theta \in \mathbb{R}^d$
\Statex - Output block-encoding error $0 < \varepsilon < 2^{-\|\theta\|_1-2(1+\log 2\kappa)}$.
\Ensure A $(1,\;\max\{a,n+n_\rho\}+\lceil \log d \rceil+4,\;\varepsilon)$-BE of $$\sigma = \frac{e^{\sum_i \theta_iH_i + \log \rho}}{\mathcal{N}},$$ where $\mathcal{N} = e^{\|\theta\|_1 + 2(1 + \log 2\kappa)}$ is a subnormalization factor.
\State Use $O_\rho$ to construct $U_\rho$, a a $(1,n+n_\rho,0)$-BE of $\rho$.
\State Construct $U_{\log \rho}$, a $(2(1+\log 2\kappa),\;n+n_\rho+2,\;\varepsilon_{\text{BE}})$-BE of $\log \rho$. This makes $t = \mathcal{O}\left( \kappa \log \left( \frac{\log \kappa}{\varepsilon_{\text{BE}}} \right)\right)$ queries to $U_\rho$, see Lemma \ref{Lemma:step1}.
\State Construct the $(\beta,b,\varepsilon_{\text{SP}})$-state-preparation-pair $(P_L,P_R)$ for $\alpha \odot \theta$, where
\Statex $\beta \gets \|\theta\|_1 + 2(1+\log 2\kappa)$
\Statex $b \gets \lceil \log d \rceil$
\Statex $\varepsilon_\text{SP} \gets \beta \varepsilon_\text{BE}$
\State Using $(P_L,P_R)$, combine $U_{\log \rho}$ and $U_j$, $j \in [d]$ to give $U_H$, a $(\beta,\;\max\{a,n+n_\rho\}+2+\lceil \log d \rceil,\;2\beta\varepsilon_\text{BE})$-BE of $H:=\sum_i \theta_iH_i + \log \rho$. This makes 1 query to $(P_L,P_R)$ and 1 query to $U_{\log \rho}$ and each $U_j$, see Proposition \ref{Proposition:General_linear_combination_BE}.
\State Construct $U_\sigma$, a $(1,\;\max\{a,n+n_\rho\}+4+\lceil \log d \rceil,\;\varepsilon)$-BE of $\sigma:= e^H/\mathcal{N}$. Makes $t = \mathcal{O}\left(\log \frac{1}{\varepsilon}\right)$ queries to $U_H$, see Lemma \ref{Lemma:step3}.
\State \Return $U_\sigma$.
\end{algorithmic}
\end{algorithm}

\twocolumngrid

\subsection{Algorithm}
We now provide the algorithm implementing the quantum Esscher Transform, see Algorithm \ref{Algorithm:QET}. We specify the constraints on the inputs and the guarantees on the output in the algorithm itself. A step-by-step analysis of Algorithm \ref{Algorithm:QET} is provided below in detail, whereafter the overall (query) complexity is stated. We summarize these information in Theorem \ref{Theorem:Theorem_for_algorithm}.

\begin{theorem}\label{Theorem:Theorem_for_algorithm}
Let us be given the block-encodings of $\rho$ and $H_j$, $j \in [d]$, parameters $\theta \in \mathbb{R}^d$ and error tolerance $\varepsilon$ as specified in Algorithm \ref{Algorithm:QET}. Then Algorithm \ref{Algorithm:QET} outputs an $\varepsilon$-approximate block-encoding of the (subnormalized) quantum Esscher Transform 
$
\sigma = \frac{e^{\sum_i \theta_iH_i + \log \rho}}{\mathcal{N}},
$
making 
$$
\widetilde{\mathcal{O}}\left(\kappa \log^2 \left( \frac{1}{\varepsilon} \right)\right)
$$ queries to $U_\rho$ and 
$$
\mathcal{O}\left(\log \frac{1}{\varepsilon}\right)
$$ queries to each $U_j$.
\end{theorem}

\begin{proof}[Proof of Theorem \ref{Theorem:Theorem_for_algorithm}]
Now we analyze the steps of Algorithm \ref{Algorithm:QET} in more detail to give the query complexity of \textsc{QEsscher($\rho,H,\theta$)}.

\begin{description}
\item[\textbf{Step 1}] From Proposition \ref{Proposition:block_encoding_density_operators} we construct $U_\rho = \widetilde{O_\rho} := (O_\rho^\dag \otimes I_n)(I_{n+n_\rho} \otimes \text{SWAP}_n)(O_\rho \otimes I_n)$, a $(1,n+n_\rho,0)$-BE of $\rho$. This makes $\mathcal{O}(1)$ queries to $O_\rho$.

\item[\textbf{Step 2}] This step entails a polynomial approximation to the logarithm function on the interval $[\frac{1}{\kappa},1]$. Denote by $\varepsilon_{\text{poly}}$ the approximation error tolerance. Choose $\varepsilon_{\text{poly}} \leq \varepsilon_{\text{BE}}$. Lemma \ref{Lemma:step1} gives $U_{\log \rho}$, a $(2(1+\log 2\kappa),\;n+n_\rho+2,\;\varepsilon_{\text{BE}})$-BE of $\log \rho$. The construction of $U_{\log \rho}$ makes $t = \mathcal{O}\left( \kappa \log \left( \frac{\log \kappa}{\varepsilon_{\text{BE}}} \right)\right)$ queries to $U_\rho$, where $t$ is the degree of the approximating polynomial (see Proposition \ref{Proposition:gslw_polynomialapproximation}/Corollary \ref{Corollary:BE_of_hermitian_matrices}).

\item[\textbf{Step 3}] Construct a $(\beta,b,\varepsilon_\text{SP})$-state-preparation-pair $(P_L,P_R)$ for $\alpha \odot \theta \in \mathbb{R}^{d+1}$, where $\alpha = (1^d, 2(1+\log 2\kappa))$ and $\theta = (\theta_1,\dots,\theta_d,1)$ (see Proposition \ref{Proposition:General_linear_combination_BE}). Choose $\beta = \|\alpha \odot \theta \|_1 = \|\theta\|_1 + 2(1+\log 2\kappa)$. $b$ has to be such that $d+1 \leq 2^b$, so choose $b = \lceil \log d \rceil$. Finally, choose $\varepsilon_\text{SP} \leq \beta\varepsilon_\text{BE}$. The construction of $(P_L,P_R)$ can be achieved using $\mathcal{O}(d)$ elementary gates \cite{berry2015simulating}.

\item[\textbf{Step 4}] Now we make use of our access to the state-preparation-pair $(P_L,P_R)$. To form linear combinations of block-encodings, the number of ancilla qubits required for each constituent block-encoding should be the same, see Proposition \ref{Proposition:GSLW_linear_combination_BE}/\ref{Proposition:General_linear_combination_BE}.  Remark \ref{Remark:Block_encoding} shows that we can always equalize this number of ancilla qubits by padding with additional ancillas. The equalized number of ancillas is $\max\{a,n+n_\rho+2\} \leq \max\{a,n+n_\rho\}+2$. We could also  take $a+n+n_\rho+2$, but we want to minimize the number of ancilla qubits. From Proposition \ref{Proposition:General_linear_combination_BE} we get $U_H$, a $(\beta,\;\max\{a,n+n_\rho\}+2+\lceil \log d \rceil,\;2\beta\varepsilon_\text{BE})$-BE of $H := \sum_i \theta_i H_i + \log \rho$, making 1 query to $(P_L,P_R)$ and 1 query to $U_{\log \rho}$ and each $U_j$.

\item[\textbf{Step 5}] Finally, we construct a block-encoding for $e^H/\mathcal{N}$. At this stage, we have a $(\beta,\;\max\{a,n+n_\rho\}+2+\lceil \log d \rceil,\;2\beta\varepsilon_\text{BE})$-BE of $H$. Lemma \ref{Lemma:step3} gives a $(1,\;\max\{a,n+n_\rho\}+\lceil \log d \rceil+4,\;\varepsilon_\text{poly}/4 + 4t\sqrt{2\varepsilon_\text{BE}})$-BE of $\sigma = e^H/{4e^\beta}$ (thus $\mathcal{N}=4e^\beta$), where $t = \mathcal{O}\left(\sqrt{\max (\beta,\log \frac{1}{\varepsilon_{\text{poly}}}) \log \frac{1}{\varepsilon_{\text{poly}}}}\right)$. It remains to make judicious choices for $\varepsilon_\text{poly}$ (note that the $\varepsilon_\text{poly}$ at this step need not be the same as the one in Step 2) and $\varepsilon_\text{BE}$ in order to ensure the overall block-encoding error is less than $\varepsilon$, i.e. 
\begin{align}\label{Equation:Final_epsilon_choices}
\frac{\varepsilon_\text{poly}}{4} + 4t\sqrt{2\varepsilon_\text{BE}} \leq \varepsilon.
\end{align}
Now given a sufficently small $\varepsilon$ such that $\varepsilon \leq 2^{-\beta}$, choose $\varepsilon_\text{poly} = \min\{\varepsilon,2^{-\beta}\} = \varepsilon$ and
$$
\varepsilon_\text{BE} = \left( \frac{\varepsilon}{8 \log \frac{1}{\varepsilon}} \right)^2.
$$
These choices ensure Equation $\ref{Equation:Final_epsilon_choices}$ is satisfied.
Note that $\lim_{x \rightarrow 0} \frac{x}{\log \frac{1}{x}} = 0$, so $\varepsilon_\text{BE} \rightarrow 0$ as $\varepsilon \rightarrow 0$. The degree of the approximating polynomial, and thus the number of queries to $U_H$ required, is $t = \mathcal{O}\left(\sqrt{\max (\beta,\log \frac{1}{\varepsilon_\text{poly}}) \log \frac{1}{\varepsilon_\text{poly}}}\right) = \mathcal{O}\left(\log \frac{1}{\varepsilon}\right)$. Recall that constructing $U_H$ itself makes 1 query to $U_{\log \rho}$ and each $U_j$. Lastly, observe that $\|e^H\| \leq e^{\|H\|} \leq e^{\sum_i |\theta_i| + \log \kappa} \leq e^\beta < \mathcal{N}$, so $\mathcal{N}$ is a valid subnormalization factor.
\end{description}

\noindent \textbf{Overall complexity:} $U_\sigma$ makes $\mathcal{O}(\log \frac{1}{\varepsilon})$ queries to $U_H$. $U_H$ queries $U_{\log \rho}$ and each $U_j$ exactly once, and $U_{\log \rho}$ in turn makes $\mathcal{O}\left( \kappa \log \left( \frac{\log \kappa}{\varepsilon_{\text{BE}}} \right)\right)$ queries to $U_\rho$. Accordingly, the implementation of $U_\sigma$ makes 
\begin{align*}
&\mathcal{O}\left(\log \frac{1}{\varepsilon}\right) \cdot \mathcal{O}\left(\kappa \log \left( \log \kappa \cdot \frac{1}{\varepsilon^2} \cdot \log^2 \frac{1}{\varepsilon} \right)\right) \\\nonumber
&\subseteq \mathcal{O}\left(\kappa \log \left( \frac{\log \kappa}{\varepsilon} \right) \log \left( \frac{1}{\varepsilon} \right)\right) \subseteq \widetilde{\mathcal{O}}\left(\kappa \log^2 \left( \frac{1}{\varepsilon} \right)\right)
\end{align*}
queries to $U_\rho$ and
$\mathcal{O}\left(\log \frac{1}{\varepsilon}\right)$ queries to each $U_j$, thus
$$
    \mathcal{O}\left(d \log \frac{1}{\varepsilon}\right)
$$
queries to $\{U_j\}_{j=1}^d$, the constraint operators collectively considered. 
\end{proof}

\subsection{Further discussion}
If the positive definite $\rho \in \mathbb{C}^{N \times N}$ is full rank, the condition number is $\kappa \geq N$ since the eigenvalue lower bound $\frac{1}{\kappa}$ must be $\leq 1/N$. 
Then the $U_\rho$-query complexity grows at least linearly with $N$.
Hence, our Esscher Transform is most relevant for low-rank cases. Assume we have $r$ non-zero eigenvalues $\geq 1/\kappa$. As a consequence $r\leq \kappa$ holds. While the condition number can still be exponential if the smallest eigenvalue is exponentially small, when the smallest eigenvalue is $1/{\rm poly} (r)$, we obtain a well-behaved query complexity. 
In addition we can allow for smaller eigenvalues, especially when we are interested only in low-rank approximations of the Esscher Transform. Let $1/\kappa_{\rm eff} \geq 1/\kappa$, with the effective condition number $\kappa_{\rm eff}$. With slight adaptations, our method can implement the Esscher Transform on the effectively well-conditioned subspace, while leaving the other part undefined. This incurs an error compared to the full Esscher Transform proportional to the importance of the neglected eigenvalues, but may be acceptable in many practical situations. 
Recall that low-rank approximations are frequently performed in statistics and machine learning. 

If the desired output model is a normalized state, one can apply similar techniques for Gibbs sampling to extract the normalized Esscher-Transformed state from the output of Algorithm \ref{Algorithm:QET}. We briefly describe this procedure and the overhead cost it incurs. More details can be found in Chapter 3 of \cite{gilyen2019thesis}. Let $\varepsilon>0$ denote the desired precision in trace distance between our approximate output and the ideal state. First, we prepare a maximally entangled state on two registers. Use Algorithm \ref{Algorithm:QET} to construct a $1$-block-encoding $U$ of $ e^{\frac{\sum_i \theta_iH_i + \log \rho}{2}}/\sqrt{\mathcal N}$ where $\mathcal{N} = e^{\|\theta\|_1 + 2(1 + \log 2\kappa)}$, with block-encoding error 
$0 < \varepsilon_1 < \varepsilon/N^2$. Then apply $U$ to the second register to obtain a state $|\psi\rangle$, so that tracing out the first register yields an approximate subnormalized state with trace distance error of $\mathcal{O}\left(\varepsilon/N\right)$. That is,
\begin{align*}
    \left\|\tr_1 (\bra{0} \otimes I) \ket{\psi} \bra{\psi} (\ket{0} \otimes I) - \frac{e^{\sum_i \theta_iH_i + \log \rho}}{N \mathcal N}\right\|_T = \mathcal{O}\left(\frac{\varepsilon}{N}\right).
\end{align*}
With $\mathcal Z := \tr \left(e^{\sum_i \theta_iH_i + \log \rho}\right)$, this state, when postselected after $\mathcal{O}\left(\sqrt{\frac{N \mathcal N}{\mathcal Z}}\log \frac{1}{\varepsilon}\right)$ steps of fixed-point amplitude amplification (refer to Theorem 27 in \cite{gilyen2019quantum}), results in a density operator 
$\varepsilon$-close to the normalized Esscher-Transformed state
\begin{align*}
\frac{e^{\sum_i \theta_iH_i + \log \rho}}{\tr (e^{\sum_i \theta_iH_i + \log \rho})}
\end{align*}
in trace distance. Taking this overhead cost into account and assuming $\varepsilon$ is sufficiently small (such that the block-encoding error satisfies $\varepsilon_1 < 2^{-\|\theta\|_1-2(1+\log 2\kappa)}$), the total query complexity of preparing the approximate Esscher-Transformed state is
\begin{align*}
&\widetilde{\mathcal{O}}\left(\kappa \log^2 \left( \frac{N^2}{\varepsilon} \right)\right) \cdot \mathcal{O}\left(\sqrt{\frac{N \mathcal N}{\mathcal Z}}\log \frac{1}{\varepsilon}\right) \\\nonumber
&\subseteq \widetilde{\mathcal{O}}\left(\kappa\sqrt{\frac{N \mathcal N}{\mathcal Z}} \log^3 \left( \frac{1}{\varepsilon} \right)\right).
\end{align*}

\section{Conclusion}
In this paper, we considered a minimum relative entropy problem for the density operator subject to equality constraints. We formally solved this problem and the solution form inspired us to define the Quantum Esscher Transform (QUEST), a generalization of the classical Esscher Transform to the quantum setting. We discussed its implementation on fault-tolerant quantum computers, leveraging techniques based on the QSVT framework. Given as inputs block-encodings of the initial quantum state and the constraint operators, the algorithm outputs an $\varepsilon$-approximate block-encoding of the Esscher-Transformed state with $U_\rho$-query complexity
\begin{align*}
    \mathcal{O}\left(\kappa \log \left( \frac{\log \kappa}{\varepsilon} \right) \log \left( \frac{1}{\varepsilon} \right)\right) \subseteq \widetilde{\mathcal{O}}\left(\kappa \log^2 \left( \frac{1}{\varepsilon} \right)\right)
\end{align*}
and $\{U_j: j \in [d]\}$-query complexity
$$
    \mathcal{O}\left(d \log \frac{1}{\varepsilon}\right).
$$
Several avenues remain open for future work:
\begin{itemize}
\item Is there a quantum algorithmic framework that can fully solve the minimum relative entropy problem? Our current approach only presents the formal solution for the optimal parameter $\lambda^*$. Approaches such as Newton's algorithm with backtracking was suggested in \cite{zorzi2013minimum}, the quantized version of which could be studied. Additionally, \cite{anshu2020sample} demonstrated that $\lambda^*$ can, in principle, be found with a convex optimization program. Can we design a quantum algorithm to effectively address this problem?

\item One could explore strategies for alternative input models. Our current work exclusively considered the purified access model, wherein the preparation of the purification of the input state was assumed. In contrast, the sampling access model, which assumes multiple independent copies of the input state, is another commonly used model. Gily{\'e}n et al. \cite{gilyen2022improved} has proposed an approach to implement approximate block-encodings of $\rho$, starting with sample access. This approach is based on a combination of density matrix exponentiation \cite{lloyd2014quantum, kimmel2017hamiltonian} and QSVT, and allows us to implement the quantum Esscher Transform in the sampling access model. We leave the total cost of this procedure for further analysis.

\item In Section \ref{Section:Connection_ITE}, we noted potential connections between the quantum Esscher Transform and imaginary-time evolution. To give these substance, further investigation is required.

\item Various applications could be envisioned for the quantum Esscher Transform. Its classical version has found usage for numerous problems in domains such as statistics, machine learning, and finance. These problems have quantum analogues, which could benefit from the quantum Esscher Transform and its implementation on quantum computers.

\end{itemize}

\section*{Acknowledgments}
The authors would like to thank Po-Wei Huang, Xiufan Li, Zhan Yu, Roberto Rubboli and Serge Massar for helpful discussions. This work is supported by the National Research Foundation, Singapore, and A*STAR under its CQT Bridging Grant and its Quantum Engineering Programme under grant NRF2021-QEP2-02-P05. KK acknowledges support from Leong Chuan Kwek, under project grant R-710-000-007-135.

\onecolumngrid
\bibliographystyle{alpha}
\bibliography{QEsscher}

\newcommand{\etalchar}[1]{$^{#1}$}
\begin{thebibliography}{DMB{\etalchar{+}}23}

\bibitem[AAKS20]{anshu2020sample}
Anurag Anshu, Srinivasan Arunachalam, Tomotaka Kuwahara, and Mehdi Soleimanifar.
\newblock Sample-efficient learning of quantum many-body systems.
\newblock In {\em 2020 IEEE 61st Annual Symposium on Foundations of Computer Science (FOCS)}, pages 685--691. IEEE, 2020.

\bibitem[BCC{\etalchar{+}}15]{berry2015simulating}
Dominic~W Berry, Andrew~M Childs, Richard Cleve, Robin Kothari, and Rolando~D Somma.
\newblock Simulating hamiltonian dynamics with a truncated taylor series.
\newblock {\em Physical review letters}, 114(9):090502, 2015.

\bibitem[BK91]{berman1991time}
Michael Berman and Ronnie Kosloff.
\newblock Time-dependent solution of the liouville-von neumann equation: Non-dissipative evolution.
\newblock {\em Computer physics communications}, 63(1-3):1--20, 1991.

\bibitem[Bra96]{braunstein1996geometry}
Samuel~L Braunstein.
\newblock Geometry of quantum inference.
\newblock {\em Physics Letters A}, 219(3-4):169--174, 1996.

\bibitem[BSS23]{beirami2023tilted}
Ahmad Beirami, Maziar Sanjabi, and Virginia Smith.
\newblock On tilted losses in machine learning: Theory and applications.
\newblock {\em Journal of Machine Learning Research}, 24:1--79, 2023.

\bibitem[CGJ18]{chakraborty2018power}
Shantanav Chakraborty, Andr{\'a}s Gily{\'e}n, and Stacey Jeffery.
\newblock The power of block-encoded matrix powers: improved regression techniques via faster hamiltonian simulation.
\newblock {\em arXiv preprint arXiv:1804.01973}, 2018.

\bibitem[DMB{\etalchar{+}}23]{dalzell2023quantum}
Alexander~M Dalzell, Sam McArdle, Mario Berta, Przemyslaw Bienias, Chi-Fang Chen, Andr{\'a}s Gily{\'e}n, Connor~T Hann, Michael~J Kastoryano, Emil~T Khabiboulline, Aleksander Kubica, et~al.
\newblock Quantum algorithms: A survey of applications and end-to-end complexities.
\newblock {\em arXiv preprint arXiv:2310.03011}, 2023.

\bibitem[Esc32]{escher1932probability}
F~Escher.
\newblock On the probability function in the collective theory of risk.
\newblock {\em Skand. Aktuarie Tidskr.}, 15:175--195, 1932.

\bibitem[FS11]{follmer2011stochastic}
Hans F{\"o}llmer and Alexander Schied.
\newblock {\em Stochastic finance: an introduction in discrete time}.
\newblock Walter de Gruyter, 2011.

\bibitem[Gil19]{gilyen2019thesis}
Andr{\'a}s Gily{\'e}n.
\newblock {\em Quantum singular value transformation \& its algorithmic applications}.
\newblock PhD thesis, University of Amsterdam, 2019.

\bibitem[GL19]{gilyen2019distributional}
Andr{\'a}s Gily{\'e}n and Tongyang Li.
\newblock Distributional property testing in a quantum world.
\newblock {\em arXiv preprint arXiv:1902.00814}, 2019.

\bibitem[GLM08]{giovannetti2008quantum}
Vittorio Giovannetti, Seth Lloyd, and Lorenzo Maccone.
\newblock Quantum random access memory.
\newblock {\em Physical review letters}, 100(16):160501, 2008.

\bibitem[GP22]{gilyen2022improved}
Andr{\'a}s Gily{\'e}n and Alexander Poremba.
\newblock Improved quantum algorithms for fidelity estimation.
\newblock {\em arXiv preprint arXiv:2203.15993}, 2022.

\bibitem[GS{\etalchar{+}}93]{gerber1993option}
Hans~U Gerber, Elias~SW Shiu, et~al.
\newblock {\em Option pricing by Esscher transforms}.
\newblock HEC Ecole des hautes {\'e}tudes commerciales, 1993.

\bibitem[GSLW19]{gilyen2019quantum}
Andr{\'a}s Gily{\'e}n, Yuan Su, Guang~Hao Low, and Nathan Wiebe.
\newblock Quantum singular value transformation and beyond: exponential improvements for quantum matrix arithmetics.
\newblock In {\em Proceedings of the 51st Annual ACM SIGACT Symposium on Theory of Computing}, pages 193--204, 2019.

\bibitem[Hj{\o}11]{hjorungnes2011complex}
Are Hj{\o}rungnes.
\newblock {\em Complex-valued matrix derivatives: with applications in signal processing and communications}.
\newblock Cambridge University Press, 2011.

\bibitem[HS06]{hubalek2006esscher}
Friedrich Hubalek and Carlo Sgarra.
\newblock Esscher transforms and the minimal entropy martingale measure for exponential l{\'e}vy models.
\newblock {\em Quantitative finance}, 6(02):125--145, 2006.

\bibitem[Jay57]{jaynes1957information}
Edwin~T Jaynes.
\newblock Information theory and statistical mechanics.
\newblock {\em Physical review}, 106(4):620, 1957.

\bibitem[KD09]{kreutz2009complex}
Ken Kreutz-Delgado.
\newblock The complex gradient operator and the cr-calculus.
\newblock {\em arXiv preprint arXiv:0906.4835}, 2009.

\bibitem[KLL{\etalchar{+}}17]{kimmel2017hamiltonian}
Shelby Kimmel, Cedric Yen-Yu Lin, Guang~Hao Low, Maris Ozols, and Theodore~J Yoder.
\newblock Hamiltonian simulation with optimal sample complexity.
\newblock {\em npj Quantum Information}, 3(1):13, 2017.

\bibitem[KQKR23]{koor2023short}
Kelvin Koor, Yixian Qiu, Leong~Chuan Kwek, and Patrick Rebentrost.
\newblock {A short tutorial on Wirtinger Calculus with applications in quantum information}.
\newblock {\em arXiv preprint arXiv:2312.04858}, 2023.

\bibitem[LC17]{low2017optimal}
Guang~Hao Low and Isaac~L Chuang.
\newblock Optimal hamiltonian simulation by quantum signal processing.
\newblock {\em Physical review letters}, 118(1):010501, 2017.

\bibitem[LC19]{low2019hamiltonian}
Guang~Hao Low and Isaac~L Chuang.
\newblock Hamiltonian simulation by qubitization.
\newblock {\em Quantum}, 3:163, 2019.

\bibitem[LMR14]{lloyd2014quantum}
Seth Lloyd, Masoud Mohseni, and Patrick Rebentrost.
\newblock Quantum principal component analysis.
\newblock {\em Nature Physics}, 10(9):631--633, 2014.

\bibitem[LYC16]{low2016methodology}
Guang~Hao Low, Theodore~J Yoder, and Isaac~L Chuang.
\newblock Methodology of resonant equiangular composite quantum gates.
\newblock {\em Physical Review X}, 6(4):041067, 2016.

\bibitem[MJE{\etalchar{+}}19]{mcardle2019variational}
Sam McArdle, Tyson Jones, Suguru Endo, Ying Li, Simon~C Benjamin, and Xiao Yuan.
\newblock Variational ansatz-based quantum simulation of imaginary time evolution.
\newblock {\em npj Quantum Information}, 5(1):75, 2019.

\bibitem[MRTC21]{martyn2021grand}
John~M Martyn, Zane~M Rossi, Andrew~K Tan, and Isaac~L Chuang.
\newblock Grand unification of quantum algorithms.
\newblock {\em PRX Quantum}, 2(4):040203, 2021.

\bibitem[MST{\etalchar{+}}20]{motta2020determining}
Mario Motta, Chong Sun, Adrian~TK Tan, Matthew~J O’Rourke, Erika Ye, Austin~J Minnich, Fernando~GSL Brandao, and Garnet Kin-Lic Chan.
\newblock Determining eigenstates and thermal states on a quantum computer using quantum imaginary time evolution.
\newblock {\em Nature Physics}, 16(2):205--210, 2020.

\bibitem[Nota]{Note1}
The notations $z,z^*$ may raise questions on independence. This is irrelevant---one may simply write $z_1,z_2$ if one wishes. We emphasize that (for each $i,j$) the fundamental input variables are the two real numbers $x$ and $y$.

\bibitem[Notb]{Note2}
Recall that for any $A \in \protect \mathcal {L}(\protect \mathcal {H})$, $\ker A \oplus \protect \supp A = \protect \mathcal {H}$, so $\Pi _{\ker A} + \Pi _{\protect \supp A} = I$.

\bibitem[Notc]{Note3}
Theoretically, any $n$-qubit quantum state can be purified with at most $n$ ancilla qubits, so one can assume $n_\rho \leq n$. In practice however, it could be more convenient to use more than $n$ ancillas for purification. Thus we make the more relaxed assumption that $n_\rho = \protect \poly (n)$.

\bibitem[OP07]{olivares2007quantum}
Stefano Olivares and Matteo~GA Paris.
\newblock Quantum estimation via the minimum kullback entropy principle.
\newblock {\em Physical Review A}, 76(4):042120, 2007.

\bibitem[RF23]{rall2023amplitude}
Patrick Rall and Bryce Fuller.
\newblock Amplitude estimation from quantum signal processing.
\newblock {\em Quantum}, 7:937, 2023.

\bibitem[Sie76]{siegmund1976importance}
David Siegmund.
\newblock Importance sampling in the monte carlo study of sequential tests.
\newblock {\em The Annals of Statistics}, pages 673--684, 1976.

\bibitem[SJ80]{shore1980axiomatic}
John Shore and Rodney Johnson.
\newblock Axiomatic derivation of the principle of maximum entropy and the principle of minimum cross-entropy.
\newblock {\em IEEE Transactions on information theory}, 26(1):26--37, 1980.

\bibitem[vAG18]{van2018improvements}
Joran van Apeldoorn and Andr{\'a}s Gily{\'e}n.
\newblock Improvements in quantum sdp-solving with applications.
\newblock {\em arXiv preprint arXiv:1804.05058}, 2018.

\bibitem[Wil13]{wilde2013quantum}
Mark~M Wilde.
\newblock {\em Quantum information theory}.
\newblock Cambridge university press, 2013.

\bibitem[ZTF13]{zorzi2013minimum}
Mattia Zorzi, Francesco Ticozzi, and Augusto Ferrante.
\newblock Minimum relative entropy for quantum estimation: Feasibility and general solution.
\newblock {\em IEEE transactions on information theory}, 60(1):357--367, 2013.

\end{thebibliography}

\newpage

\begin{appendix}
\section{Proof of Theorem \ref{Theorem:solution_to_classical_problem}}\label{Appendix:Complete_solution_lambda}
Before delving into the proof, we introduce some notation and state a lemma to facilitate its presentation.
The exponential family of $P$ with respect to the random variable $X$ is the set of measures
\begin{align*}
\Lambda = \left\{\frac{e^{\lambda\cdot X}P}{\mathbb{E}_P[e^{\lambda \cdot X}]}: \lambda \in \mathbb{R}^d \right\}.
\end{align*}
Also, let
\begin{align*}
M = \{Q: \mathbb{E}_Q[X] = m \}.
\end{align*}

\begin{lemma}(Proposition 3.24 -- \cite{follmer2011stochastic})
\label{Lemma:minimizing_Q}
Let $P$ be a probability measure on $(\Omega,\Sigma)$ and $X$ be a random variable on $\Omega$. Fix $m \in \mathbbm R^d$. Then for \textit{any} probability measure $Q$ on $(\Omega,\Sigma)$ satisfying $\mathbb{E}_Q[X]= m$, we have 
\be
D(Q\|P) \geq \sup_{\lambda \in \mathbb{R}^d}\left[\lambda \cdot m - \log \mathbb{E}_P[e^{\lambda \cdot X}]\right].
\ee
Moreover the inequality is saturated if $Q = Q_{\lambda'} := e^{\lambda'\cdot X}P/\mathbb{E}_P[e^{\lambda' \cdot X}] \in \Lambda \cap M$ for some $\lambda' \in \mathbb{R}^d$:
\be
D(Q_{\lambda'}\|P) = \lambda' \cdot m - \log \mathbb{E}_P[e^{\lambda' \cdot X}] = \sup_{\lambda\in \mathbb{R}^d} \left[\lambda \cdot m - \log \mathbb{E}_P[e^{\lambda \cdot X}] \right].
\ee
\end{lemma}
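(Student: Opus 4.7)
The plan is to mirror the strategy already used in the quantum analog, Lemma \ref{Lemma:minimizing_sigma}: introduce an arbitrary element $Q_\lambda \in \Lambda$, split $D(Q\|P)$ into $D(Q\|Q_\lambda)$ plus a linear-in-$\lambda$ remainder, and then discard the nonnegative $D(Q\|Q_\lambda)$ term to obtain the lower bound. Since $Q_\lambda$ has the explicit form $e^{\lambda \cdot X}P/\mathbb{E}_P[e^{\lambda \cdot X}]$, the remainder evaluates cleanly against the constraint $\mathbb{E}_Q[X]=m$.

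Concretely, I would first fix an arbitrary $\lambda \in \mathbb{R}^d$ and verify the pointwise identity
\begin{align*}
\log \frac{Q(\omega)}{P(\omega)} = \log \frac{Q(\omega)}{Q_\lambda(\omega)} + \lambda\cdot X(\omega) - \log \mathbb{E}_P[e^{\lambda \cdot X}].
\end{align*}
Taking the expectation under $Q$ and using $\mathbb{E}_Q[X]=m$ together with the fact that the normalizing constant $\log \mathbb{E}_P[e^{\lambda \cdot X}]$ is $\omega$-independent, this yields
\begin{align*}
D(Q\|P) = D(Q\|Q_\lambda) + \lambda \cdot m - \log \mathbb{E}_P[e^{\lambda \cdot X}].
\end{align*}
By Gibbs' inequality $D(Q\|Q_\lambda) \geq 0$, and since $\lambda$ was arbitrary, taking the supremum over $\lambda \in \mathbb{R}^d$ establishes the claimed lower bound.

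For the saturation statement, suppose $Q = Q_{\lambda'} \in \Lambda \cap M$ for some $\lambda' \in \mathbb{R}^d$. Plugging $\lambda = \lambda'$ into the identity makes $D(Q_{\lambda'}\|Q_{\lambda'}) = 0$, leaving $D(Q_{\lambda'}\|P) = \lambda'\cdot m - \log \mathbb{E}_P[e^{\lambda' \cdot X}]$. Since the left-hand side is independent of $\lambda$ and already matches the value of the bracketed quantity at $\lambda = \lambda'$, the supremum is attained there, and in particular $\lambda' = \argmax_{\lambda \in \mathbb{R}^d}[\lambda\cdot m - \log \mathbb{E}_P[e^{\lambda\cdot X}]]$.

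There is no serious obstacle; the only subtle point is the absolute-continuity bookkeeping needed to make every relative entropy finite. By the reduction carried out earlier in the paper we may assume $P>0$ on $\Omega$, whence $Q_\lambda>0$ for every $\lambda$, so the identity above holds with every term finite whenever $Q \ll P$ (otherwise the inequality is trivial since its left-hand side is $+\infty$). Thus the proof reduces to the one-line algebraic manipulation above followed by an application of the nonnegativity of relative entropy.
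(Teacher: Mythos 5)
Your proof is correct and follows essentially the same route as the paper's: both decompose $D(Q\|P)$ as $D(Q\|Q_\lambda)$ plus $\lambda\cdot m - \log\mathbb{E}_P[e^{\lambda\cdot X}]$, drop the nonnegative relative entropy term, and take the supremum over $\lambda$, with saturation handled by setting $\lambda=\lambda'$. Your remark on the absolute-continuity bookkeeping is a small but welcome addition that the paper leaves implicit.
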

\begin{proof}
Each $\lambda \in \mathbb{R}^d$ gives rise to a corresponding $Q_\lambda \in \Lambda$ (note that $Q_\lambda$ need not be in $M$). Then for any arbitrary $Q$, we have
\be
D(Q\|P) &=& \sum_{\omega \in \Omega}Q(\omega)\log \frac{Q(\omega)}{Q_{\lambda}(\omega)} \frac{Q_{\lambda}(\omega)}{P(\omega)}\\\nonumber
&=& D(Q\|Q_{\lambda})+\sum_{\omega \in \Omega}Q(\omega)\log  \frac{Q_{\lambda}(\omega)}{P(\omega)}\\\nonumber
&\geq& \sum_{\omega \in \Omega}Q(\omega)\log  \frac{Q_{\lambda}(\omega)}{P(\omega)}\\\nonumber
&=& \sum_{\omega \in \Omega}Q(\omega)\log \frac{ e^{\lambda \cdot X(\omega)}}{\mathbb{E}_P[e^{\lambda \cdot X}]}\\\nonumber
&=& E_{Q}[\lambda \cdot X]- \log \mathbb{E}_P[e^{\lambda \cdot X}]\\\nonumber
&=& \lambda \cdot m - \log \mathbb{E}_P[e^{\lambda \cdot X}].
\ee
The third inequality is due to Jensen's inequality $D(Q\|P) \geq 0$. Since this holds for all $\lambda \in \mathbb{R}^d$, we conclude that $D(Q\|P) \geq \sup _{\lambda \in \mathbb{R}^d}\left[\lambda \cdot m - \log \mathbb{E}_P[e^{\lambda \cdot X}]\right]$. Furthermore, if $\lambda' \in \mathbb{R}^d$ is such that $Q_{\lambda'} \in \Lambda \cap M$, then letting $Q=Q_{\lambda'}$ and rerunning the same argument sequence above gives
\begin{align*}
D(Q_{\lambda'}\|P) &= \sum_{\omega \in \Omega}Q_{\lambda'}(\omega)\log \frac{Q_{\lambda'}(\omega)}{P(\omega)}\\
&= \sum_{\omega \in \Omega}Q_{\lambda'}(\omega)\log \frac{ e^{\lambda' \cdot X(\omega)}}{\mathbb{E}_P[e^{\lambda' \cdot X}]}\\
&= E_{Q_{\lambda'}}[\lambda' \cdot X]- \log \mathbb{E}_P[e^{\lambda' \cdot X}]\\
&= \lambda' \cdot m - \log \mathbb{E}_P[e^{\lambda' \cdot X}].
\end{align*}
\end{proof}

\begin{proof}[Proof of Theorem \ref{Theorem:solution_to_classical_problem}]
First, we have required $\min_{\omega \in \Omega} X_i(\omega) < m_i < \max_{\omega \in \Omega} X_i(\omega)$ because otherwise the constraints $\mathbb{E}_Q[X_i]=m_i$ cannot be satisfied. The Lagrangian function is
\begin{align*}
\mathcal{L}(Q,\lambda,\eta)= \sum_{\omega}Q(\omega)\log \frac{Q(\omega)}{P(\omega)} - \sum_{i=1}^d \lambda_i \left(\sum_\omega Q(\omega)X_i(\omega)-m_i\right)-\eta\left(\sum_\omega Q(\omega)-1\right).
\end{align*}
Setting the first-order derivatives of $\mathcal{L}(Q,\lambda,\eta)$ with respect to $Q(\omega)$ to zero gives
\begin{align*}
Q^\star(\omega)=\frac{e^{\lambda^\star \cdot X(\omega)}P(\omega)}{\mathbb{E}_P[e^{\lambda^\star \cdot X}]},
\end{align*}
where $\lambda^\star$ is to be determined from the $d$ constraints $\mathbb{E}_Q[X]=m$:
\begin{align}
\mathbb{E}_Q[X]=m &\iff \frac{\mathbb{E}_P[Xe^{\lambda^\star \cdot X}]}{\mathbb{E}_P[e^{\lambda^\star \cdot X}]} - m = 0\\\nonumber
&\iff \frac{\mathbb{E}_P[(X-m)e^{\lambda^\star \cdot (X-m)}]}{\mathbb{E}_P[e^{\lambda^\star \cdot (X-m)}]} = 0\\\nonumber
&\iff \frac{\partial}{\partial \lambda} \log \mathbb{E}_P[e^{\lambda \cdot (X-m)}]|_{\lambda=\lambda^\star} = 0\\\nonumber
&\iff \frac{\partial}{\partial \lambda} \mathbb{E}_P[e^{\lambda \cdot (X-m)}]|_{\lambda=\lambda^\star} = 0.
\end{align}
The last equivalence holds because $\log f(x)$ and $f(x)$ share the same minimum/maximum points, provided $f(x) > 0$ at those points. It remains to show $Q^\star$ indeed minimizes $D(Q\|P)$, subject to the constraints $E_{Q}[X] = m$. But this follows easily from Lemma \ref{Lemma:minimizing_Q}. Furthermore, since $x \mapsto x \log x$ is a strictly convex function, $D(Q\|P)$ is a strictly convex functional of $Q$ and so it can have at most one minimizer in the convex set $M$, thereby showing the uniqueness of $Q^\star$. Finally, again using Lemma \ref{Lemma:minimizing_Q} we have $\lambda^\star = \argmax_{\lambda \in \mathbb{R}^d} \left[\lambda \cdot m - \log \mathbb{E}_P[e^{\lambda \cdot X}] \right] = \argmin_{\lambda \in \mathbb{R}^d} \left[ \log \mathbb{E}_P[e^{\lambda \cdot (X-m)}] \right] = \argmin_{\lambda \in \mathbb{R}^d} \mathbb{E}_P[e^{\lambda \cdot (X-m)}]$. 
\end{proof}

\begin{remark}\label{Remark:Equivalence_classical_ET}
The random variable $X$ induces from the probability measure $P$ the probability mass function $P_X(x) := P(X^{-1}(x))$ on $E:=X(\Omega)$. Assume we have, for probability measures $Q,P$ and random variable $X$, that
\begin{align*}
Q(\omega)=\frac{e^{\theta\cdot X(\omega)}P(\omega)}{\mathbb{E}_P[e^{\theta\cdot X}]}.
\end{align*}
Then for the probability mass functions $Q_X$ and $P_X$ we have
\begin{align*}
Q_X(x) = Q(X^{-1}(x)) &= \sum_{\omega: X(\omega)=x} Q(\omega)\\
&= \frac{\sum_{\omega: X(\omega)=x} e^{\theta\cdot X(\omega)} P(\omega)}{\sum_{\omega \in \Omega} e^{\theta\cdot X(\omega)} P(\omega)}\\
&= \frac{e^{\theta\cdot x} P_X(x)}{\sum_{x \in E}\sum_{\omega: X(\omega)=x} e^{\theta\cdot X(\omega)} P(\omega)}\\
&= \frac{e^{\theta\cdot x} P_X(x)}{\sum_{x \in E} e^{\theta\cdot x} P_X(x)},
\end{align*}
i.e., $Q_X$ is the Esscher Transform of $P_X$ as  defined above. 
\end{remark}

\section{Wirtinger Calculus}\label{Appendix:Wirtinger_Calculus}
The `Wirtinger Calculus' provides a methodology for optimization problems involving complex matrices. It enables `differentiation as usual' with respect to complex matrices. 
In this appendix, we state only the main definitions and results needed to solve Problem \ref{Problem:quantum_motivating_problem_statement}. For a more thorough exposition of this framework, we direct the reader to \cite{koor2023short,hjorungnes2011complex,kreutz2009complex}.

Consider functions of the form $f: \mathbb{C}^{n \times n} \longrightarrow \mathbb{C}$. Since $\mathbb{C}$ is $\mathbb{R}^2$ endowed with the multiplication operation $(a,b) \times (c,d) \mapsto (ac-bd,ad+bc)$, we can view 
\begin{align*}
    f:\; &\mathbb{R}^{2(n \times n)} \longrightarrow \mathbb{R}^2\\
    &(x_{ij},y_{ij})_{i,j \in [n]}=(\mathbf{X},\mathbf{Y}) \mapsto (u(\mathbf{X},\mathbf{Y}),v(\mathbf{X},\mathbf{Y})).
\end{align*} 
For $i=1,\dots,n$ regard $z_{ij},z_{ij}^*$ as functions from $\mathbb{R}^{n \times n} \times \mathbb{R}^{n \times n}$ to $\mathbb{C}$, where
$z_{ij}(\mathbf{X},\mathbf{Y}) = x_{ij}+iy_{ij}$ and $z_{ij}^*(\mathbf{X},\mathbf{Y}) = x_{ij}-iy_{ij}$.\footnote{The notations $z,z^*$ may raise questions on independence. This is irrelevant---one may simply write $z_1,z_2$ if one wishes. We emphasize that (for each $i,j$) the fundamental input variables are the two real numbers $x$ and $y$.} Then we have a function $\tilde{f}: \mathbb{C}^{n \times n} \times \mathbb{C}^{n \times n} \longrightarrow \mathbb{C}$ such that
\begin{align}\label{Equation:tildef_to_f_matrix}
    f(\mathbf{X},\mathbf{Y}) := \underline{\tilde{f}\circ (\mathbf{Z},\mathbf{Z^*})}(\mathbf{X},\mathbf{Y}) = \tilde{f}(\mathbf{Z}(\mathbf{X},\mathbf{Y}),\mathbf{Z^*}(\mathbf{X},\mathbf{Y})) = \tilde{f}(\mathbf{X+iY},\mathbf{X-iY}).
\end{align}
Partial differentiating $f$ with respect to each $x_{ij}$ and $y_{ij}$, and then rearranging terms, we have for $1 \leq i,j \leq n$
\begin{align}\label{Equation:after_rearranging}
    \frac{\partial \tilde{f}}{\partial z_{ij}}(\mathbf{Z}(\mathbf{X},\mathbf{Y}),\mathbf{Z^*}(\mathbf{X},\mathbf{Y})) &= \frac{1}{2}\left(\frac{\partial f}{\partial x_{ij}} - i\frac{\partial f}{\partial y_{ij}}\right) (\mathbf{X},\mathbf{Y})\\\nonumber
    \frac{\partial \tilde{f}}{\partial z_{ij}^*}(\mathbf{Z}(\mathbf{X},\mathbf{Y}),\mathbf{Z^*}(\mathbf{X},\mathbf{Y})) &= \frac{1}{2}\left(\frac{\partial f}{\partial x_{ij}} + i\frac{\partial f}{\partial y_{ij}}\right) (\mathbf{X},\mathbf{Y}).
\end{align}
To preserve the matrix structure of the parameters $z_{ij}$ and $z_{ij}^*$ we use the standard notation
\begin{align}\label{Equation:matrix_wirtinger_derivatives}
    \frac{\partial}{\partial \mathbf{Z}} := 
    \begin{bmatrix}
        \frac{\partial}{\partial z_{11}} & \dots & \frac{\partial}{\partial z_{1n}}\\
        \vdots & \ddots & \vdots\\
        \frac{\partial}{\partial z_{n1}} & \dots & \frac{\partial}{\partial z_{nn}}
    \end{bmatrix} \qquad
    \frac{\partial}{\partial \mathbf{Z^*}} := 
    \begin{bmatrix}
        \frac{\partial}{\partial z_{11}^*} & \dots & \frac{\partial}{\partial z_{1n}^*}\\
        \vdots & \ddots & \vdots\\
        \frac{\partial}{\partial z_{n1}^*} & \dots & \frac{\partial}{\partial z_{nn}^*}
    \end{bmatrix}
\end{align}
and similarly for $\frac{\partial}{\partial \mathbf{X}}$ and $\frac{\partial}{\partial \mathbf{Y}}$. Then Equation \ref{Equation:after_rearranging} is concisely stated as 
\begin{align}
    \frac{\partial \tilde{f}}{\partial \mathbf{Z}}(\mathbf{Z}(\mathbf{X},\mathbf{Y}),\mathbf{Z^*}(\mathbf{X},\mathbf{Y})) &= \frac{1}{2}\left(\frac{\partial f}{\partial \mathbf{X}} - i\frac{\partial f}{\partial \mathbf{Y}}\right) (\mathbf{X},\mathbf{Y})\\\nonumber
    \frac{\partial \tilde{f}}{\partial \mathbf{Z^*}}(\mathbf{Z}(\mathbf{X},\mathbf{Y}),\mathbf{Z^*}(\mathbf{X},\mathbf{Y})) &= \frac{1}{2}\left(\frac{\partial f}{\partial \mathbf{X}} + i\frac{\partial f}{\partial \mathbf{Y}}\right) (\mathbf{X},\mathbf{Y}).
\end{align}
$\frac{\partial}{\partial \mathbf{Z}}$ and $\frac{\partial}{\partial \mathbf{Z^*}}$ are the \textit{matrix Wirtinger derivatives} of $f$. Often, we abuse notation and write both $f(\mathbf{X},\mathbf{Y})$ and $f(\mathbf{Z},\mathbf{Z^*})$, so we can write
\begin{align}
    \frac{\partial}{\partial \mathbf{Z}} = \frac{1}{2}\left(\frac{\partial}{\partial \mathbf{X}} - i\frac{\partial}{\partial \mathbf{Y}}\right), \qquad \frac{\partial}{\partial \mathbf{Z^*}} = \frac{1}{2}\left(\frac{\partial}{\partial \mathbf{X}} + i\frac{\partial}{\partial \mathbf{Y}}\right).
\end{align}

The following three propositions are all we need in this paper. We omit their proofs, which can all be found in \cite{koor2023short}.
\begin{proposition}\label{Proposition:Optimization_with_Wirtinger_derivatives_matrix}
    Let $f: \mathbb{C}^{n \times n} \longrightarrow \mathbb{R}$ be a real-valued function of complex matrices. Then $f$ has a stationary point at $\mathbf{Z}=[z_{ij}]_{i,j \in [n]}$ if and only if
    \begin{align*}
        \frac{\partial f}{\partial \mathbf{Z}}(\mathbf{Z}) = 0 \quad \left( \text{or equivalently}\;\; \frac{\partial f}{\partial \mathbf{Z^*}}(\mathbf{Z}) = 0 \right).
    \end{align*}
\end{proposition}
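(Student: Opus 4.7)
The plan is to reduce the claim to the standard real-variable criterion for a stationary point, using the definitions of the Wirtinger derivatives given in the appendix. A stationary point of $f$, viewed as a function of the $2n^2$ real variables $(x_{ij}, y_{ij})_{i,j \in [n]}$ via the identification in Equation \ref{Equation:tildef_to_f_matrix}, is by definition a point at which all the real partial derivatives $\partial f / \partial x_{ij}$ and $\partial f / \partial y_{ij}$ vanish. So the goal is to show that the vanishing of the entrywise matrix Wirtinger derivative $\partial f / \partial \mathbf{Z}$ (resp. $\partial f / \partial \mathbf{Z^*}$) is equivalent to the vanishing of all of these real partials.

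First I would invoke the defining formulas
\begin{align*}
\frac{\partial f}{\partial z_{ij}} = \frac{1}{2}\left(\frac{\partial f}{\partial x_{ij}} - i\frac{\partial f}{\partial y_{ij}}\right), \qquad \frac{\partial f}{\partial z_{ij}^*} = \frac{1}{2}\left(\frac{\partial f}{\partial x_{ij}} + i\frac{\partial f}{\partial y_{ij}}\right),
\end{align*}
together with the crucial observation that, since $f$ is real-valued, both $\partial f / \partial x_{ij}$ and $\partial f / \partial y_{ij}$ are real numbers. Hence $\partial f / \partial z_{ij}$ and $\partial f / \partial z_{ij}^*$ are complex conjugates of one another, with real part $\tfrac{1}{2} \partial f / \partial x_{ij}$ and imaginary parts $\mp \tfrac{1}{2} \partial f / \partial y_{ij}$. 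This immediately gives the parenthetical equivalence in the statement: $\partial f / \partial \mathbf{Z} = 0$ entrywise if and only if $\partial f / \partial \mathbf{Z^*} = 0$ entrywise.

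Next I would argue both directions. For the easy direction, if $(\mathbf{X}, \mathbf{Y})$ is a stationary point in the real sense, then each $\partial f / \partial x_{ij}$ and $\partial f / \partial y_{ij}$ equals zero, so by the displayed formulas every entry of $\partial f / \partial \mathbf{Z}$ (and of $\partial f / \partial \mathbf{Z^*}$) vanishes. For the converse, if $\partial f / \partial \mathbf{Z} = 0$, then for each $(i,j)$ the complex number $\tfrac{1}{2}(\partial f / \partial x_{ij} - i \partial f / \partial y_{ij})$ is zero; since its real and imaginary parts are both real, each must vanish independently, giving $\partial f / \partial x_{ij} = \partial f / \partial y_{ij} = 0$. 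This is precisely the condition for $\mathbf{Z}$ to be a stationary point of $f$.

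I do not expect a genuine obstacle here: the entire content of the proposition is the bookkeeping that a single complex equation $a - ib = 0$ with $a,b \in \mathbb{R}$ is equivalent to the pair of real equations $a = b = 0$, applied componentwise across all matrix entries. The only subtlety to flag carefully is the reality of $f$, which is what makes the real and imaginary parts of the Wirtinger derivatives separately meaningful; without it, $\partial f / \partial \mathbf{Z}$ and $\partial f / \partial \mathbf{Z^*}$ would need to be required to vanish independently rather than being equivalent conditions.
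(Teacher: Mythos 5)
Your proof is correct, and since the paper explicitly omits the proof of this proposition (deferring to the cited reference on Wirtinger calculus), there is no internal proof to compare against. Your argument is the standard one: the reality of $f$ forces $\partial f/\partial x_{ij}$ and $\partial f/\partial y_{ij}$ to be real, making $\partial f/\partial z_{ij}$ and $\partial f/\partial z_{ij}^*$ complex conjugates, so the vanishing of either is equivalent entrywise to the vanishing of both real partials, which is the definition of a stationary point. You have correctly identified the reality of $f$ as the load-bearing hypothesis; without it the two Wirtinger derivatives are independent and the ``or equivalently'' in the statement would fail.
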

Whether the solution of the above equation actually gives a minimum/maximum/saddle point has to be checked via additional considerations or by inspecting higher-order derivatives.

\begin{proposition}\label{Proposition:Wirtinger_derivative_of_traces}
Let $\mathbf{Z}$ be a complex, unstructured (see below) matrix and $F(z) = \sum_{n=0}^\infty c_nz^n$ be analytic. Define the scalar function $f(\mathbf{Z,Z^*}):= \tr(F(\mathbf{Z}))$. Then
\begin{align*}
\frac{\partial \tr(F(\mathbf{Z}))}{\partial \mathbf{Z}} = F'(\mathbf{Z})^T
\end{align*}
where $F'(\cdot)$ is the complex derivative of $F(\cdot)$.
\end{proposition}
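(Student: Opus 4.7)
The plan is to reduce the general analytic case to the monomial case $F(z)=z^n$ via termwise differentiation of the power series, and to dispatch monomials by direct entrywise computation. First I would expand
\begin{align*}
\tr(\mathbf{Z}^n) = \sum_{i_1,\ldots,i_n} Z_{i_1 i_2}\, Z_{i_2 i_3}\cdots Z_{i_n i_1}
\end{align*}
and differentiate with respect to a single entry $Z_{pq}$. Because $\mathbf{Z}$ is unstructured, its entries and their conjugates are independent real/imaginary coordinates, and the definitions in \eqref{Equation:after_rearranging}--\eqref{Equation:matrix_wirtinger_derivatives} give $\partial Z_{ij}/\partial Z_{pq}=\delta_{ip}\delta_{jq}$ while $\partial Z^*_{ij}/\partial Z_{pq}=0$. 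Applying the product rule to the cyclic product, each of the $n$ factors contributes identically by cyclic invariance of the trace, yielding $n(\mathbf{Z}^{n-1})_{qp}$. Reassembling the $(p,q)$ entries according to the convention of \eqref{Equation:matrix_wirtinger_derivatives} gives
\begin{align*}
\frac{\partial \tr(\mathbf{Z}^n)}{\partial \mathbf{Z}} = n(\mathbf{Z}^{n-1})^T,
\end{align*}
which settles the monomial case.

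Next, I would invoke the linearity of the Wirtinger derivative together with convergence of the matrix series. For $\mathbf{Z}$ with spectrum inside the disk of convergence of $F$, the operator series $F(\mathbf{Z})=\sum_n c_n \mathbf{Z}^n$ converges in operator norm, hence so does $\tr(F(\mathbf{Z}))$ in $\mathbb{C}$. Granting termwise differentiation, we obtain
\begin{align*}
\frac{\partial \tr(F(\mathbf{Z}))}{\partial \mathbf{Z}} = \sum_{n=1}^\infty c_n\, \frac{\partial \tr(\mathbf{Z}^n)}{\partial \mathbf{Z}} = \sum_{n=1}^\infty n c_n (\mathbf{Z}^{n-1})^T = F'(\mathbf{Z})^T,
\end{align*}
where the last equality uses the standard fact from complex analysis that the derived series $\sum_n n c_n z^{n-1}$ has the same radius of convergence as $F$, so applying it to $\mathbf{Z}$ and transposing commute.

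The main obstacle is justifying the interchange of the Wirtinger derivative with the infinite sum. Since $\partial/\partial Z_{pq}$ is by construction $\tfrac{1}{2}(\partial/\partial x_{pq}-i\,\partial/\partial y_{pq})$, the interchange reduces to a Weierstrass-type uniform-convergence argument: using the norm bound $\|\mathbf{Z}^n\|\leq\|\mathbf{Z}\|^n$ and the root test on $(c_n)$, both the series for $\tr(F(\mathbf{Z}))$ and its formal entrywise derivative converge uniformly on compact subsets of the matrix disk of convergence, legitimizing term-by-term differentiation. A secondary subtlety worth flagging is that the ``unstructured'' hypothesis is essential: if $\mathbf{Z}$ were constrained (e.g.\ Hermitian or symmetric), the entries would not be algebraically independent, so $\partial Z_{ij}/\partial Z_{pq}=\delta_{ip}\delta_{jq}$ would fail and the clean formula would acquire correction terms. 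I would state this caveat at the outset and then carry out the two steps above.
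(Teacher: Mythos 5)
The paper itself does not prove this proposition; it defers to \cite{koor2023short} with the remark ``We omit their proofs, which can all be found in \cite{koor2023short}.'' So there is no in-paper argument to compare against. That said, your proposal is correct and is essentially the canonical argument. The monomial computation
\begin{align*}
\frac{\partial \tr(\mathbf{Z}^n)}{\partial Z_{pq}} = \sum_{k=1}^{n}\sum_{i_1} (\mathbf{Z}^{k-1})_{i_1 p}\,(\mathbf{Z}^{n-k})_{q\, i_1} = n\,(\mathbf{Z}^{n-1})_{qp}
\end{align*}
is exactly right, the unstructured hypothesis enters precisely where you flag it (to get $\partial Z_{ij}/\partial Z_{pq}=\delta_{ip}\delta_{jq}$ and $\partial Z_{ij}^*/\partial Z_{pq}=0$), and the extension from monomials to the full power series by termwise differentiation, justified by locally uniform convergence (a Weierstrass-type argument for the real/imaginary coordinate partials underlying the Wirtinger operator) is the standard way to finish. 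One small technical remark: you write that $F(\mathbf{Z})=\sum_n c_n\mathbf{Z}^n$ converges in operator norm; via $\|\mathbf{Z}^n\|\le\|\mathbf{Z}\|^n$ this is guaranteed only on the ball $\|\mathbf{Z}\|<R$, whereas $F(\mathbf{Z})$ is actually well defined whenever the spectral radius of $\mathbf{Z}$ is below $R$. Since the statement is local and one can pass to an equivalent submultiplicative norm adapted to any fixed $\mathbf{Z}$, this does not affect the conclusion, but it is worth stating the domain more carefully.
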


So far, by writing $f: \mathbb{C}^{n \times n} \longrightarrow \mathbb{C}$ we have implicitly assumed the input matrices have independent components (we call such matrices `unstructured'). This condition often does not hold, e.g. when our matrices of interest are symmetric/Hermitian etc. To obtain the correct Wirtinger derivatives with respect to structured matrices, we resort to the chain rule.
\begin{proposition}[Wirtinger derivatives with respect to Hermitian matrices]\label{Proposition:Wirtinger_hermitian}
    Let $f(\mathbf{Z,Z^*})$ be a function of complex Hermitian matrices. Then the Wirtinger derivatives of $f$ with respect to $\mathbf{Z,Z^*}$ are given by
    \begin{align*}
        \frac{\partial f}{\partial \mathbf{Z}} = \left[\frac{\partial f}{\partial \mathbf{\tilde{Z}}} + \left(\frac{\partial f}{\partial \mathbf{\tilde{Z}^*}}\right)^T\right]_{\mathbf{\tilde{Z}}=\mathbf{Z}} \qquad \text{and} \qquad
        \frac{\partial f}{\partial \mathbf{Z^*}} = \left[\frac{\partial f}{\partial \mathbf{\tilde{Z}^*}} + \left(\frac{\partial f}{\partial \mathbf{\tilde{Z}}}\right)^T\right]_{\mathbf{\tilde{Z}}=\mathbf{Z}}.
    \end{align*}
\end{proposition}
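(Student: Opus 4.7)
The plan is to reduce the computation to the unstructured setting (where Proposition \ref{Proposition:Wirtinger_derivative_of_traces} and the basic Wirtinger identities apply) via the chain rule, being careful that for a Hermitian matrix $\mathbf{Z}$ the entries obey the linear constraints $Z_{ij}=Z_{ji}^{*}$, equivalently $\mathbf{Z}^{*}=\mathbf{Z}^{T}$. Concretely, I would introduce a smooth surjection $\phi$ from unstructured complex matrices onto Hermitian matrices, namely $\phi(\mathbf{\tilde Z}):=\tfrac{1}{2}(\mathbf{\tilde Z}+\mathbf{\tilde Z}^{\dagger})$, and consider $F(\mathbf{\tilde Z},\mathbf{\tilde Z}^{*}):=f(\phi(\mathbf{\tilde Z}),\phi(\mathbf{\tilde Z})^{*})$. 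Since $\phi$ is the identity on Hermitian matrices, $F$ agrees with $f$ on the Hermitian locus, and one can then transport the Hermitian Wirtinger derivatives to unstructured ones.

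Next I would carry out the chain rule entrywise. For the individual entries we have $\phi(\mathbf{\tilde Z})_{ij}=\tfrac12(\tilde Z_{ij}+\tilde Z_{ji}^{*})$ and $\phi(\mathbf{\tilde Z})^{*}_{ij}=\tfrac12(\tilde Z_{ij}^{*}+\tilde Z_{ji})$, so
\begin{align*}
\frac{\partial F}{\partial \tilde Z_{kl}}
=\sum_{i,j}\left[\frac{\partial f}{\partial Z_{ij}}\frac{\partial \phi_{ij}}{\partial \tilde Z_{kl}}+\frac{\partial f}{\partial Z^{*}_{ij}}\frac{\partial \phi^{*}_{ij}}{\partial \tilde Z_{kl}}\right]
=\frac{1}{2}\frac{\partial f}{\partial Z_{kl}}+\frac{1}{2}\frac{\partial f}{\partial Z^{*}_{lk}},
\end{align*}
and analogously for $\partial F/\partial \tilde Z^{*}_{kl}$. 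Packaging these into matrix form yields
\begin{align*}
\frac{\partial F}{\partial \mathbf{\tilde Z}}=\tfrac12\frac{\partial f}{\partial \mathbf{Z}}+\tfrac12\Bigl(\frac{\partial f}{\partial \mathbf{Z}^{*}}\Bigr)^{T},
\qquad
\frac{\partial F}{\partial \mathbf{\tilde Z}^{*}}=\tfrac12\frac{\partial f}{\partial \mathbf{Z}^{*}}+\tfrac12\Bigl(\frac{\partial f}{\partial \mathbf{Z}}\Bigr)^{T}.
\end{align*}
Inverting this linear relation on the Hermitian locus (where $\partial f/\partial \mathbf{Z}$ itself is Hermitian-symmetric with respect to transpose and conjugation paired appropriately) produces the stated identities.

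Equivalently, and perhaps more transparently, I would compute from the definitions in Eq. \ref{Equation:matrix_wirtinger_derivatives} by plugging in the real parametrization of a Hermitian matrix: diagonal entries $Z_{ii}=x_{ii}\in\mathbb{R}$, off-diagonal entries $Z_{ij}=x_{ij}+iy_{ij}$ with $Z_{ji}=x_{ij}-iy_{ij}$ for $i<j$. Expressing $\partial f/\partial x_{ij}$ and $\partial f/\partial y_{ij}$ via the unstructured Wirtinger derivatives $\partial \tilde f/\partial \tilde Z_{ij},\partial \tilde f/\partial \tilde Z^{*}_{ij},\partial \tilde f/\partial \tilde Z_{ji},\partial \tilde f/\partial \tilde Z^{*}_{ji}$ (by the classical scalar chain rule for $\tilde Z_{ij}=x_{ij}+iy_{ij}$, $\tilde Z_{ji}=x_{ij}-iy_{ij}$), and then reassembling $\tfrac12(\partial/\partial x_{ij}-i\partial/\partial y_{ij})$ into the $(i,j)$-entry of $\partial f/\partial \mathbf{Z}$, one reads off the two terms in the claim: the contribution at position $(i,j)$ from $\tilde Z_{ij}$ gives $(\partial \tilde f/\partial \mathbf{\tilde Z})_{ij}$, while the contribution from the constraint partner $\tilde Z_{ji}^{*}$ gives $(\partial \tilde f/\partial \mathbf{\tilde Z}^{*})_{ji}=((\partial \tilde f/\partial \mathbf{\tilde Z}^{*})^{T})_{ij}$.

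The main obstacle is the bookkeeping on the diagonal, where $y_{ii}\equiv 0$ and so only one real parameter is active, versus the off-diagonal, where both $x_{ij}$ and $y_{ij}$ are free. I expect that a careful case split shows that diagonal and off-diagonal contributions assemble into the same uniform matrix formula, because in the diagonal case $\tilde Z_{ii}$ and $\tilde Z_{ii}^{*}$ collapse to a single real variable, halving each individual term but doubling the multiplicity, so the net entry of $\partial f/\partial \mathbf{Z}$ matches the off-diagonal formula. Once this verification is in place, the matrix identity follows directly, and the formula for $\partial f/\partial \mathbf{Z}^{*}$ is obtained by complex conjugation, or by repeating the argument symmetrically.
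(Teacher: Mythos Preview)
The paper does not actually supply a proof of this proposition: it explicitly states that the proofs of Propositions \ref{Proposition:Optimization_with_Wirtinger_derivatives_matrix}--\ref{Proposition:Wirtinger_hermitian} are omitted and can be found in the cited reference, giving only the hint ``we resort to the chain rule.'' Your second approach---parametrizing a Hermitian matrix by its independent real degrees of freedom and applying the scalar chain rule entrywise---is exactly this, and your off-diagonal computation is correct and yields the claimed formula. The diagonal bookkeeping you flag is indeed the only delicate point; with the natural convention that $\partial f/\partial Z_{ii}$ is identified with $\partial f/\partial x_{ii}$ (since $Z_{ii}$ is real), the diagonal entries also match.

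Your first approach, via the Hermitian projection $\phi(\tilde{\mathbf Z})=\tfrac12(\tilde{\mathbf Z}+\tilde{\mathbf Z}^\dagger)$, has a genuine gap. In the chain-rule line you write $\partial f/\partial Z_{ij}$ on the right-hand side, but this is precisely the structured derivative you are trying to compute, so the identity is circular rather than a derivation. If instead you interpret that symbol as the unstructured derivative $\partial\tilde f/\partial\tilde Z_{ij}$ of the formula-extension, your computation actually shows $\partial F/\partial\tilde{\mathbf Z}=\tfrac12\bigl[\partial\tilde f/\partial\tilde{\mathbf Z}+(\partial\tilde f/\partial\tilde{\mathbf Z}^*)^T\bigr]$, which is \emph{half} the proposition's right-hand side; you would still need to argue that the structured derivative equals $2\,\partial F/\partial\tilde{\mathbf Z}$, which amounts to redoing approach two. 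Moreover, the ``inversion'' you propose cannot work: the two equations you obtain for $\partial F/\partial\tilde{\mathbf Z}$ and $\partial F/\partial\tilde{\mathbf Z}^*$ are transposes of one another (since $F$ factors through $\phi$), so the linear system is degenerate and does not determine $\partial f/\partial\mathbf Z$ and $\partial f/\partial\mathbf Z^*$ separately. Drop approach one and present approach two as the proof.
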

Here, the tildes above $\mathbf{\tilde{Z},\tilde{Z}^*}$ indicate that they are unstructured matrices. Thus, to derive the Wirtinger derivatives with respect to Hermitian matrices, first obtain the Wirtinger derivative of $f$, assuming the inputs are unstructured. Then form the correct expressions given above and reinstate the structured matrices $\mathbf{Z,Z^*}$ as the arguments.

\section{Proof of Theorem \ref{Theorem:partial_solution_to_initial_quantum_problem}}\label{Appendix:quantum_solution}
\begin{proof}
To facilitate the presentation of the solution, certain parts of the argument sequence are collated into lemmas and placed below the main body of this proof.

\textbf{Step 1.} First, for any candidate solution $\sigma$ we enforce $\ker \rho \subseteq \ker \sigma$. By Lemma \ref{Lemma:ker_and_supp_rho}, this implies $\sigma_{\ker \rho} = \mathbf{0}$ and furthermore enables the decomposition of $\sigma$ into a direct sum: $\sigma = \sigma_{\supp \rho} \oplus \sigma_{\ker \rho}$. With this decomposition, we can consider the trace of the operators over just the subspace $\supp \rho$. More specifically, $\tr(\sigma H_i) = \tr(\sigma(\Pi_{\supp \rho} + \Pi_{\ker \rho})H_i(\Pi_{\supp \rho} + \Pi_{\ker \rho})) = \tr(\sigma_{\supp \rho} H_{i,\supp \rho})$\footnote{Recall that for any $A \in \mathcal{L}(\mathcal{H})$, $\ker A \oplus \supp A = \mathcal{H}$, so $\Pi_{\ker A} + \Pi_{\supp A} = I$.} and
\begin{eqnarray}
S(\sigma\|\rho) &=& \tr\{\sigma_{\supp \rho} \oplus \sigma_{\ker \rho}\left(\log (\sigma_{\supp \rho} \oplus \sigma_{\ker \rho}) - \log (\rho_{\supp \rho} \oplus \rho_{\ker \rho})\right)\}\\
&=& \tr\{\sigma_{\supp \rho}(\log \sigma_{\supp \rho} - \log \rho_{\supp \rho})\} + \underbrace{\tr\{\sigma_{\ker \rho}(\log \sigma_{\ker \rho} - \log \rho_{\ker \rho})\}}_{=0}\\
&=& S(\sigma_{\supp \rho} \| \rho_{\supp \rho}).
\end{eqnarray}
Thus, we can replace $\mathcal{H}$ in Problem \ref{Problem:quantum_motivating_problem_statement} by $\supp \rho$, and the operators by their restrictions to $\supp \rho$. Note that $\rho_{\supp \rho}$ is positive definite.

\textbf{Step 2.} Next we obtain the form of $\sigma_{\supp \rho}$. For ease of presentation let us simply denote $(\sigma/\rho/H_i)_{\supp \rho}$ by $(\sigma/\rho/H_i)$. With $\rho$ now positive definite, $\log \rho$ is well-defined. Now we invoke Proposition \ref{Proposition:Optimization_with_Wirtinger_derivatives_matrix} to extract the optimal $\sigma$ by setting $\frac{\partial \mathcal{L}}{\partial \sigma} = \mathbf{0}$.
Set up the Lagrangian
\begin{eqnarray}
\label{Equation:Lagrangian}
\mathcal{L} = \tr\{\sigma(\log \sigma - \log \rho)\} - \sum_i \lambda_i (\tr(\sigma H_i)-m_i) - \eta(\tr\sigma-1)
\end{eqnarray}
where $\lambda_i$ and $\eta$ are the Lagrange multipliers. Making use of Propositions \ref{Proposition:Wirtinger_derivative_of_traces} and \ref{Proposition:Wirtinger_hermitian}, setting $\frac{\partial \mathcal{L}}{\partial \sigma}$ to zero gives
\begin{align*}
\frac{\partial \mathcal{L}}{\partial \sigma} = \mathbf{0} &\implies (\log \sigma)^T + I - (\log \rho)^T - (\lambda \cdot H)^T - \eta I = \mathbf{0}\\
&\implies \sigma = e^{\eta-1}e^{\lambda \cdot H + \log \rho}\\
&\implies \sigma = \frac{e^{\lambda \cdot H + \log \rho}}{\tr(e^{\lambda \cdot H + \log \rho})} \qquad \text{after normalization}.
\end{align*}
It remains to determine $\lambda$ from the constraints $\tr(\sigma H) = m$. Plugging in the above expression for $\sigma$ into the constraints we have
\begin{align*}
\frac{\tr(e^{\lambda \cdot H + \log \rho} H)}{\tr(e^{\lambda \cdot H + \log \rho})} = m \implies &\frac{\tr(e^{\lambda \cdot H + \log \rho} (H-m))}{\tr(e^{\lambda \cdot H + \log \rho})} = 0\\
\implies &\tr(e^{\lambda \cdot (H-m) + \log \rho} (H-m)) = 0.
\end{align*}

\textbf{Step 3.} Now we show that $\sigma^\star$ as given in Eq. \ref{Equation:Solution_form} indeed minimizes $S(\sigma\|\rho)$. But this follows easily from Lemma \ref{Lemma:minimizing_sigma}. Furthermore, since $S(\sigma\|\rho)$ is a strictly convex functional of $\sigma$, it can have at most one minimizer in the convex set $M$, thereby showing the uniqueness of $\sigma^\star$. Finally, again by Lemma \ref{Lemma:minimizing_sigma} we note that $\lambda^\star$ satisfies $\lambda^\star = \argmax_{\lambda \in \mathbb{R}^d} \left[\lambda \cdot m - \log \tr(e^{\lambda \cdot H + \log \rho}) \right] = \argmin_{\lambda \in \mathbb{R}^d} \log \tr(e^{\lambda \cdot (H-m) + \log \rho}) = \argmin_{\lambda \in \mathbb{R}^d} \tr(e^{\lambda \cdot (H-m) + \log \rho})$, where the last equality holds because $\log f(x)$ and $f(x)$ share the same minimum/maximum points, provided $f(x) > 0$ at those points. 
\end{proof}

\begin{lemma}\label{Lemma:ker_and_supp_rho}
Let $\sigma,\rho \in \mathcal{L}(\mathcal{H})$ be normal operators, so that they have spectral decompositions. If $\ker \rho \subseteq \ker \sigma$, then $\sigma_{\ker \rho} = \mathbf{0}$ and $\sigma$ can be partitioned into a direct sum:
\begin{align*}
\sigma = \sigma_{\supp \rho} \oplus \sigma_{\ker \rho}.
\end{align*}
\end{lemma}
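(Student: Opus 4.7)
The plan is to establish the two assertions in the order stated. For the first, note that by definition $\sigma_{\ker \rho} = \Pi_{\ker \rho}\,\sigma\,\Pi_{\ker \rho}$, and since every vector in $\ker \rho$ lies in $\ker \sigma$ by hypothesis, we get $\sigma \Pi_{\ker \rho} = \mathbf{0}$ immediately, and thus $\sigma_{\ker \rho} = \mathbf{0}$.

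For the direct sum decomposition, I would use the fact that because $\rho$ is normal we have $\mathcal{H} = \supp \rho \oplus \ker \rho$, so $\Pi_{\supp \rho} + \Pi_{\ker \rho} = I$. Expanding
\begin{align*}
\sigma \;=\; (\Pi_{\supp \rho} + \Pi_{\ker \rho})\,\sigma\,(\Pi_{\supp \rho} + \Pi_{\ker \rho})
\end{align*}
produces four terms: the two diagonal blocks $\sigma_{\supp \rho}$ and $\sigma_{\ker \rho}$ together with two cross-terms $\Pi_{\supp \rho}\,\sigma\,\Pi_{\ker \rho}$ and $\Pi_{\ker \rho}\,\sigma\,\Pi_{\supp \rho}$. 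The first cross-term vanishes by the same observation as above, $\sigma\,\Pi_{\ker \rho} = \mathbf{0}$.

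The only nontrivial step is the vanishing of the second cross-term $\Pi_{\ker \rho}\,\sigma\,\Pi_{\supp \rho}$, and this is where the normality of $\sigma$ is essential. For a normal operator $\sigma$ we have $\ker \sigma = \ker \sigma^\dagger$, equivalently $\ker \sigma = (\mathrm{range}\,\sigma)^\perp$, so any $\ket{\phi} \in \ker \sigma$ satisfies $\bra{\phi}\sigma = 0$. Combined with $\ker \rho \subseteq \ker \sigma$, this yields $\Pi_{\ker \rho}\,\sigma = \mathbf{0}$, killing the second cross-term.

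Collecting everything, $\sigma = \sigma_{\supp \rho} + \sigma_{\ker \rho}$, and because the two summands are supported on orthogonal subspaces of $\mathcal{H}$, this sum is precisely the direct sum $\sigma_{\supp \rho} \oplus \sigma_{\ker \rho}$. The main (and essentially the only) obstacle is recognising that normality of $\sigma$ is what allows one to pass from the one-sided condition $\sigma\,\Pi_{\ker \rho}=\mathbf{0}$ to the two-sided condition needed to eliminate both cross-terms; without it one would need the additional hypothesis $\ker \rho \subseteq \ker \sigma^\dagger$.
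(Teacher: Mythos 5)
Your proof is correct and follows essentially the same route as the paper's: resolve the identity via $\Pi_{\supp\rho}+\Pi_{\ker\rho}=I$ (the paper does this in the eigenbasis of $\rho$, you do it with projectors), observe that the two diagonal blocks survive while the cross-terms vanish, and use $\ker\rho\subseteq\ker\sigma$ to kill $\sigma\Pi_{\ker\rho}$. The one place you go further than the paper is in isolating exactly where normality of $\sigma$ enters: the paper dispatches all three unwanted terms with the single remark that $\ket{i}\in\ker\rho\subseteq\ker\sigma$ for $i\in S^c$, which handles $\sigma\Pi_{\ker\rho}=\mathbf{0}$ directly but, as you correctly note, only handles $\Pi_{\ker\rho}\sigma=\mathbf{0}$ after invoking $\ker\sigma=\ker\sigma^\dagger$, a property that holds for normal operators but not in general. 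Making that step explicit is a genuine improvement in clarity, and your closing remark that without normality one would need the extra hypothesis $\ker\rho\subseteq\ker\sigma^\dagger$ is exactly right.
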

\begin{proof}
Expand $\sigma$ in terms of the eigenbasis of $\rho$, $\{\ket{i}\}_{i=0}^{N-1}$. Let $S \subseteq [N]-1$ be the index subset such that $\text{span}\{\ket{i}: i \in S\} = \supp \rho$, so $\text{span}\{\ket{i}: i \in S^c\} = \ker \rho$. We have
\begin{align*}
\sigma =& \sum_{i,j=0}^{N-1} \braket{i|\sigma|j} \ket{i}\bra{j} \\\nonumber
= &\underbrace{\sum_{i \in S}\sum_{j \in S} \braket{i|\sigma|j} \ket{i}\bra{j}}_{=\;\sigma_{\supp \rho}} + \underbrace{\sum_{i \in S}\sum_{j \in S^c} \braket{i|\sigma|j} \ket{i}\bra{j}}_{=\mathbf{0}}\\
&+ \underbrace{\sum_{i \in S^c}\sum_{j \in S} \braket{i|\sigma|j} \ket{i}\bra{j}}_{=\mathbf{0}} + \underbrace{\sum_{i \in S^c}\sum_{j \in S^c} \braket{i|\sigma|j} \ket{i}\bra{j}}_{=\;\sigma_{\ker \rho}=\;\mathbf{0}},
\end{align*}
where the annihilation of the last three terms comes about because for $i \in S^c$, $\ket{i} \in \ker \rho \subseteq \ker \sigma$. 

Note that the partition of an operator into a direct sum over \textit{another} operator's ker and supp subspaces does not hold in general.
\end{proof}

The following lemma is the quantized version of Lemma \ref{Lemma:minimizing_Q}. We employ analogous arguments and notation, starting with
\begin{align*}
\Lambda = \left\{\frac{e^{\lambda\cdot H + \log \rho}}{\tr(e^{\lambda\cdot H + \log \rho})}: \lambda \in \mathbb{R}^d \right\} \quad \text{and} \quad
M = \{\sigma: \tr(\sigma H) = m\}.
\end{align*}

\begin{lemma}\label{Lemma:minimizing_sigma}
Let $\rho \in \mathcal{D}(\mathcal{H})$ and $H_i, i \in [d]$ be observables on $\mathcal{H}$. Fix $m \in \mathbb{R}^d$. Then for any density operator $\sigma \in \mathcal{D}(\mathcal{H})$ satisfying $\tr(\sigma H)=m$, we have
\be
S(\sigma\|\rho) \geq \sup_{\lambda\in \mathbb{R}^d} \left[\lambda \cdot m - \log \tr(e^{\lambda \cdot H + \log \rho})\right].
\ee
Moreover the inequality is saturated if $\sigma = \sigma_{\lambda'} := e^{\lambda' \cdot H + \log \rho}/\tr(e^{\lambda' \cdot H + \log \rho)} \in \Lambda \cap M$ for some $\lambda' \in \mathbb{R}^d$:
\be
S(\sigma_{\lambda'}\|\rho) = \lambda' \cdot m - \log \tr(e^{\lambda' \cdot H + \log \rho}) = \sup_{\lambda \in \mathbb{R}^d} \left[\lambda \cdot m - \log \tr(e^{\lambda \cdot H + \log \rho}) \right].
\ee
\end{lemma}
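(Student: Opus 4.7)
The plan is to mimic the classical argument (Lemma~\ref{Lemma:minimizing_Q}) but with care taken for the non-commutativity of $\sigma$, $H$, and $\log\rho$. The central trick is to introduce the auxiliary family
\begin{align*}
\sigma_\lambda := \frac{e^{\lambda\cdot H + \log\rho}}{\tr(e^{\lambda\cdot H + \log\rho})}, \qquad Z(\lambda) := \tr(e^{\lambda\cdot H + \log\rho}),
\end{align*}
which is a well-defined density operator for every $\lambda\in\mathbb{R}^d$, and to rewrite $S(\sigma\|\rho)$ in terms of $S(\sigma\|\sigma_\lambda)$, which is manifestly non-negative by Klein's inequality. As in Step~1 of Theorem~\ref{Theorem:partial_solution_to_initial_quantum_problem}, I would first note that we may restrict to $\supp\rho$, so that $\rho$ is strictly positive and $\log\rho$ is well-defined. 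Since the exponential of a Hermitian operator is strictly positive, $\sigma_\lambda$ is strictly positive as well, so $S(\sigma\|\sigma_\lambda)$ is finite whenever $\sigma$ is a density operator on $\supp\rho$.

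The first step is a direct computation of $S(\sigma\|\sigma_\lambda)$. Using $\log\sigma_\lambda = \lambda\cdot H + \log\rho - \log Z(\lambda)\,I$ (which is valid because the argument of the exponential is Hermitian and $Z(\lambda)$ is a positive scalar), I would expand
\begin{align*}
S(\sigma\|\sigma_\lambda) &= \tr(\sigma\log\sigma) - \tr(\sigma\log\sigma_\lambda) \\
&= \tr(\sigma\log\sigma) - \tr(\sigma\log\rho) - \lambda\cdot\tr(\sigma H) + \log Z(\lambda)\\
&= S(\sigma\|\rho) - \lambda\cdot m + \log Z(\lambda),
\end{align*}
using the assumed constraint $\tr(\sigma H) = m$. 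Rearranging,
\begin{align*}
S(\sigma\|\rho) = S(\sigma\|\sigma_\lambda) + \lambda\cdot m - \log Z(\lambda).
\end{align*}

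The lower bound now follows immediately: since $S(\sigma\|\sigma_\lambda) \geq 0$ by non-negativity of quantum relative entropy, we have $S(\sigma\|\rho) \geq \lambda\cdot m - \log Z(\lambda)$ for every $\lambda\in\mathbb{R}^d$, and taking the supremum yields the claimed inequality. For the saturation statement, if $\sigma = \sigma_{\lambda'}\in\Lambda\cap M$, then $S(\sigma_{\lambda'}\|\sigma_{\lambda'})=0$ and the identity above collapses to $S(\sigma_{\lambda'}\|\rho) = \lambda'\cdot m - \log Z(\lambda')$. Combined with the general lower bound applied to $\sigma_{\lambda'}$ itself, this forces $\lambda'\cdot m - \log Z(\lambda')$ to simultaneously be $\geq$ and $\leq$ the supremum, hence to equal it.

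The only real subtlety to watch for is the legitimacy of the identity $\log\sigma_\lambda = \lambda\cdot H + \log\rho - \log Z(\lambda)\,I$; this needs $\lambda\cdot H + \log\rho$ to be Hermitian (so that its exponential is positive and its logarithm is the original operator), which is clear, and it needs $Z(\lambda)>0$, which holds because $e^{\lambda\cdot H + \log\rho}$ is a positive definite operator. Beyond that, there is no genuine obstacle: once the auxiliary state $\sigma_\lambda$ is introduced, everything reduces to Klein's inequality and the algebraic identity above.
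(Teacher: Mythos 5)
Your proof is correct and takes essentially the same route as the paper: introduce the auxiliary family $\sigma_\lambda$, use the algebraic identity $\log\sigma_\lambda = \lambda\cdot H + \log\rho - \log Z(\lambda)\,I$ together with non-negativity of $S(\sigma\|\sigma_\lambda)$ (Klein's inequality) to derive the lower bound, and then specialize to $\sigma = \sigma_{\lambda'}$ where $S(\sigma_{\lambda'}\|\sigma_{\lambda'})=0$ to obtain saturation. The only cosmetic difference is that you solve for $S(\sigma\|\rho)$ from $S(\sigma\|\sigma_\lambda)$ whereas the paper decomposes $S(\sigma\|\rho)$ directly, but the key identity and the argument are identical.
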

\begin{proof}
Each $\lambda \in \mathbb{R}^d$ gives rise to a corresponding $\sigma_\lambda \in \Lambda$ (note that $\sigma_\lambda$ need not be in $M$). Then for any $\sigma$ satisfying $\tr(\sigma H)=m$, we have
\be
S(\sigma\|\rho)
&=& S(\sigma\|\sigma_{\lambda}) + \tr\{\sigma(\log \sigma_{\lambda}-\log \rho)\}\\ \nonumber
(\text{nonnegativity of $S(\sigma\|\rho)$}) &\geq& \tr\{\sigma (\log  (e^{\lambda \cdot H + \log \rho}) -\log \tr(e^{\lambda \cdot H + \log \rho}) -\log \rho)\}\\ \nonumber
&=& \tr\{\sigma (\lambda \cdot H)\} - \log \tr(e^{\lambda \cdot H + \log \rho})\\ \nonumber
&=& \lambda \cdot m- \log \tr(e^{\lambda \cdot H + \log \rho}).
\ee
Since this holds for all $\lambda \in \mathbb{R}^d$, we conclude that $S(\sigma\|\rho) \geq \sup_{\lambda \in \mathbb{R}^d}\left[\lambda \cdot m- \log \tr(e^{\lambda \cdot H + \log \rho})\right]$. Furthermore, if $\lambda' \in \mathbb{R}^d$ is such that $\sigma_{\lambda'} \in \Lambda \cap M$, then letting $\sigma=\sigma_{\lambda'}$ and rerunning the same argument sequence above gives
\begin{align*}
S(\sigma_{\lambda'}\|\rho) &= \tr\{\sigma_{\lambda'}(\log \sigma_{\lambda'}-\log \rho)\}\\
&= \tr\{\sigma_{\lambda'} (\log  (e^{\lambda' \cdot H + \log \rho}) -\log \tr(e^{\lambda' \cdot H + \log \rho}) -\log \rho)\}\\
&= \tr\{\sigma_{\lambda'} (\lambda' \cdot H)\} - \log \tr(e^{\lambda' \cdot H + \log \rho})\\
&= \lambda' \cdot m - \log \tr(e^{\lambda' \cdot H + \log \rho}).
\end{align*}
In particular, this also shows that $\lambda' = \argmax_{\lambda \in \mathbb{R}^d} \left[\lambda \cdot m - \log \tr(e^{\lambda \cdot H + \log \rho}) \right]$.
\end{proof}

\section{Block-encodings and Quantum Singular Value Transformation}\label{Appendix:BE+QSVT}
The technique of quantum signal processing \cite{low2016methodology} and its lifting via `qubitization' to quantum singular value transformation (QSVT) \cite{low2019hamiltonian,gilyen2019quantum} provide a concise way to formulate quantum algorithms, particularly for linear algebraic tasks. This framework has provided more efficient implementations of several existing quantum algorithms, such as Hamiltonian simulation \cite{low2017optimal,low2019hamiltonian}, amplitude amplification and estimation \cite{gilyen2019quantum,rall2023amplitude} and quantum linear systems solving \cite{gilyen2019quantum}, and even led to the discovery of new algorithms. For our purposes, we do not actually need the full generality of QSVT. As our matrices of interest are Hermitian and thus admit spectral decompositions, a relaxed version of QSVT---quantum \textit{eigenvalue} transformation (QET)---suffices. We direct readers interested in learning more about QSVT to \cite{gilyen2019quantum,martyn2021grand,dalzell2023quantum}.

\begin{definition}[Block-Encoding]\label{Definition:Block_encoding}
Let $A$ be an $n$-qubit matrix, $\alpha, \varepsilon \in \mathbb{R}_+$ and $a \in \mathbb{N}$. We say that the $(n+a)$-qubit unitary $U$ is an $(\alpha,a,\varepsilon)$-block-encoding of $A$ if
\[
\|A-\alpha(\bra{0^a}\otimes I_n)U(\ket{0^a}\otimes I_n)\| \leq \varepsilon.
\]
\end{definition}

\begin{remark}\label{Remark:Block_encoding}
Note that if $U$ is an $(\alpha,a,\varepsilon)$-BE of $A$, then equivalently it is a $(1,a,\frac{\varepsilon}{\alpha})$-BE of $\frac{A}{\alpha}$. Also, if we have a $(\alpha,a,\varepsilon)$-BE of $A$ then we also have a $(\alpha,a+a',\varepsilon+\varepsilon')$-BE of $A$, where $1 \leq a' \in \mathbb{N}$ and $\varepsilon' > 0$. Making the increment $a'$ simply corresponds to tacking on an extra $a'$-qubit identity operator $I_{a'}$. More specifically, if $U$ is an $(\alpha,a,\varepsilon)$-BE of $A$ then $I_{a'} \otimes U$ is an $(\alpha,a+a',\varepsilon)$-BE of $A$, since
\begin{align*}
\|A-\alpha(\bra{0^a}\otimes I_n)U(\ket{0^a}\otimes I_n)\| \leq \varepsilon \implies \|A-\alpha(\bra{0^{a'+a}}\otimes I_n)I_{a'} \otimes U(\ket{0^{a'+a}}\otimes I_n)\| \leq \varepsilon. 
\end{align*}
Finally, if $\varepsilon$ is already an error bound, $\varepsilon+\varepsilon'$ clearly serves as another error bound, albeit a weaker one.
\end{remark}

\cite{gilyen2019quantum} provides a construction of \textit{exact} block-encodings for density operators, assuming access to oracles which prepare the purifications of the density operators:
\begin{definition}[Purified quantum query-access]\label{Definition:Purified_access}
    Let $\rho$ be an $n$-qubit density operator. We say $\rho$ has purified quantum query-access if we have access to a $(n_\rho+n)$-qubit unitary operator $O_\rho$, where
    \begin{align*}
        O_\rho\ket{0^{n_\rho}}\ket{0^n} = \ket{\rho}
    \end{align*}
    prepares $\ket{\rho}$, the purification of $\rho$ (i.e. $\text{tr}_{n_\rho} \ket{\rho}\bra{\rho} = \rho$) with the help of $n_\rho$ ancilla qubits.\footnote{Theoretically, any $n$-qubit quantum state can be purified with at most $n$ ancilla qubits, so one can assume $n_\rho \leq n$. In practice however, it could be more convenient to use more than $n$ ancillas for purification. Thus we make the more relaxed assumption that $n_\rho = \poly(n)$.}
\end{definition}

\begin{proposition}[Block-encoding of density operators -- Lemma 45, \cite{gilyen2019quantum}]\label{Proposition:block_encoding_density_operators}
Let $\rho$ be an $n$-qubit density operator with purified quantum query-access via $O_\rho$. Then $\widetilde{O_\rho}:=(O_\rho^\dag \otimes I_n)(I_{n_\rho+n} \otimes \text{SWAP}_n)(O_\rho \otimes I_n)$ is a $(1,n+n_\rho,0)$-BE of $\rho$.
\end{proposition}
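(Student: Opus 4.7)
The plan is to prove this by direct calculation, since the block-encoding is claimed to be exact ($\varepsilon=0$). Label the three registers involved as $A$ (the $n_\rho$-qubit purification ancilla to be traced out), $B$ (the $n$-qubit register where $\rho$ lives after tracing out $A$), and $C$ (the $n$-qubit ``system'' register on which the block-encoded $\rho$ should act). I interpret the middle factor $I_{n_\rho+n}\otimes\mathrm{SWAP}_n$ as $I_A\otimes\mathrm{SWAP}_{BC}$, which is the natural bipartite reading. By Definition \ref{Definition:Block_encoding} it then suffices to verify the operator identity $(\bra{0^{n_\rho+n}}_{AB}\otimes I_C)\,\widetilde{O_\rho}\,(\ket{0^{n_\rho+n}}_{AB}\otimes I_C) = \rho_C$.

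Next, I would track an arbitrary test vector $\ket{0^{n_\rho}}_A\ket{0^n}_B\ket{\phi}_C$ through the three layers. Writing the purification as $\ket{\rho}_{AB}=\sum_{i,j}\alpha_{ij}\ket{i}_A\ket{j}_B$, the first layer $O_\rho\otimes I_C$ yields $\sum_{i,j,k}\alpha_{ij}\phi_k\ket{i}_A\ket{j}_B\ket{k}_C$, where $\phi_k=\langle k|\phi\rangle$. The $\mathrm{SWAP}_{BC}$ then exchanges labels between $B$ and $C$, giving $\sum_{i,j,k}\alpha_{ij}\phi_k\ket{i}_A\ket{k}_B\ket{j}_C$. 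Finally $O_\rho^\dagger\otimes I_C$ acts on $AB$, and the projection onto $\ket{0^{n_\rho+n}}_{AB}$ picks out the amplitude $\bra{0^{n_\rho+n}}O_\rho^\dagger\ket{i,k}_{AB}=\alpha_{ik}^*$, leaving the state $\sum_{j,k}\bigl(\sum_i\alpha_{ij}\alpha_{ik}^*\bigr)\phi_k\ket{j}_C$ on register $C$.

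The key final observation is that $\sum_i\alpha_{ij}\alpha_{ik}^*=\rho_{jk}$, since by definition $\rho=\mathrm{tr}_A\ket{\rho}\bra{\rho}=\sum_{j,k}\bigl(\sum_i\alpha_{ij}\alpha_{ik}^*\bigr)\ket{j}\bra{k}$. Hence the projected state equals $\rho\ket{\phi}_C$, and since $\ket{\phi}$ was arbitrary we obtain the claimed exact identity. Unitarity of $\widetilde{O_\rho}$ is automatic as it is a product of three unitaries, so we conclude that $\widetilde{O_\rho}$ is a $(1,n+n_\rho,0)$-BE of $\rho$.

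I do not anticipate any real obstacle: the content is essentially careful bookkeeping. The only mild subtlety is maintaining a consistent register ordering so that the partial trace defining $\rho$ aligns with the sandwich of $O_\rho$ and $O_\rho^\dagger$ appearing around the SWAP; once the three-register setup is fixed, the computation above is forced.
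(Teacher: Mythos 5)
Your proof is correct and is the standard direct calculation behind Lemma 45 of Gilyén et al., which the paper cites without re-deriving; tracking the test vector through the three layers and then using $\sum_i\alpha_{ij}\alpha_{ik}^{*}=\rho_{jk}$ (the definition of the partial trace) is exactly the intended argument. One small point in your favour worth flagging: for the three factors to act on a common $n_\rho+2n$-qubit space, the middle factor has to be read as $I_{n_\rho}\otimes\mathrm{SWAP}_{BC}$ with $\mathrm{SWAP}_{BC}$ a $2n$-qubit operator; as literally written, $I_{n_\rho+n}\otimes\mathrm{SWAP}_n$ would act on $n_\rho+3n$ qubits, so the displayed expression carries a typo, and the bipartite reading you adopted is the correct repair.
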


For general matrices which need not be density operators, \cite{chakraborty2018power,gilyen2019quantum} also showed how to implement their block-encodings efficiently, assuming the existence of quantum random access memory (QRAM) \cite{giovannetti2008quantum}. 
Given block-encodings of operators $A_i$, we can construct block-encodings of their linear combinations and products. For linear combinations, we make use of an auxiliary tool known as a `state preparation pair'. Recall that $\|\cdot\|_1$ is the $l_1$/Manhattan norm.
\begin{definition}[State Preparation Pair]\label{Definition:State_preparation_pair}
Let $y \in \mathbb{C}^m$ and $\|y\|_1 \leq \beta$. The pair of unitaries $(P_L,P_R)$ is called a ($\beta,b,\varepsilon_{\text{SP}}$)-state-preparation-pair for $y$ if
\begin{align*}
P_L\ket{0^b} = \sum_{j=0}^{2^b-1} c_j\ket{j}, \quad
P_R\ket{0^b} = \sum_{j=0}^{2^b-1} d_j\ket{j}
\end{align*}
such that $\sum_{j=0}^{m-1} |y_j-\beta c_j^*d_j| \leq \varepsilon_{\text{SP}}$ and $c_j^*d_j = 0$ for $j=m,\dots,2^b-1$.
\end{definition}
One can think of a state preparation pair as encoding the desired state/vector $y$ in the first $m$ elements of a length-$2^b$ column vector whose elements are $c_j^*d_j$, up to an error of $\varepsilon_{\text{SP}}$. The role of $\beta$ is to take care of normalization.

\begin{proposition}[Linear combination of block-encoded matrices -- Lemma 52, \cite{gilyen2019quantum}]\label{Proposition:GSLW_linear_combination_BE}
Let 
\begin{enumerate}[i.]
\item $A_j,\; j=0,\dots,m-1$ be $n$-qubit operators with respective ($\alpha,a,\varepsilon_{\text{BE}}$)-BEs $U_j$,
\item $A = \sum_{j=0}^{m-1} y_jA_j$ for $y := (y_0,\dots,y_{m-1}) \in \mathbb{C}^m$,
\item $(P_L,P_R)$ be a $(\beta,b,\varepsilon_{\text{SP}})$-state-preparation-pair for $y$.
\end{enumerate}
Then there exists a $(\alpha\beta,a+b,\alpha\varepsilon_{\text{SP}}+\beta\varepsilon_{\text{BE}})$-BE of $A$, given by
\[
\widetilde{W} = (P_L^\dag \otimes I_a \otimes I_n)W(P_R \otimes I_a \otimes I_n),
\]
where
\[
W = \sum_{j=0}^{m-1} \ket{j}\bra{j} \otimes U_j + \sum_{j=m}^{2^b-1} \ket{j}\bra{j} \otimes I_a \otimes I_n
\]
is a $(n+a+b)$-qubit unitary.
\end{proposition}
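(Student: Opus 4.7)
The plan is direct computation of the top-left block of $\widetilde{W}$, followed by a triangle-inequality bookkeeping. First, I would verify that $W$ is unitary: the projectors $\{\ket{j}\bra{j}\}_{j=0}^{2^b-1}$ form a complete orthogonal set on the $b$-qubit register, so $W$ is block-diagonal in the $\ket{j}$ basis with each block being the unitary $U_j$ (for $j<m$) or $I_a \otimes I_n$ (for $j \ge m$); hence $W$ is an $(a+b+n)$-qubit unitary, and therefore so is $\widetilde{W}$.

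Next I would compute the ``top-left block'' of $\widetilde{W}$ with respect to the $(a+b)$ ancilla register. Writing the ancilla state as $\ket{0^b}\otimes\ket{0^a}$ (matching the registers on which $P_L^\dagger$ and $I_a$ act, respectively) and using that $\bra{0^b}P_L^\dagger = \sum_j c_j^*\bra{j}$ and $P_R\ket{0^b} = \sum_k d_k\ket{k}$,
\begin{align*}
(\bra{0^{a+b}}\otimes I_n)\,\widetilde{W}\,(\ket{0^{a+b}}\otimes I_n)
&= \sum_{j,k} c_j^* d_k\,(\bra{j}\otimes\bra{0^a}\otimes I_n)\,W\,(\ket{k}\otimes\ket{0^a}\otimes I_n)\\
&= \sum_{j=0}^{m-1} c_j^* d_j\,(\bra{0^a}\otimes I_n)\,U_j\,(\ket{0^a}\otimes I_n),
\end{align*}
where the off-diagonal $j\neq k$ terms vanish because of the $\ket{j}\bra{j}$ structure of $W$, and the tail $j\ge m$ terms vanish because $c_j^*d_j=0$ there by Definition \ref{Definition:State_preparation_pair}. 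Denote $\widetilde{A}_j := \alpha(\bra{0^a}\otimes I_n)U_j(\ket{0^a}\otimes I_n)$, which by assumption satisfies $\|A_j-\widetilde{A}_j\| \le \varepsilon_{\text{BE}}$ and, since $U_j$ is unitary, also $\|\widetilde{A}_j\|\le \alpha$. The display above then reads $\alpha\beta(\bra{0^{a+b}}\otimes I_n)\widetilde{W}(\ket{0^{a+b}}\otimes I_n) = \beta\sum_j c_j^* d_j\,\widetilde{A}_j$.

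The remaining step is a triangle-inequality split, inserting the target $A=\sum_j y_j A_j$:
\begin{align*}
\Big\|A - \beta\sum_j c_j^* d_j\,\widetilde{A}_j\Big\|
&\le \Big\|\sum_j y_j(A_j-\widetilde{A}_j)\Big\| + \Big\|\sum_j (y_j-\beta c_j^*d_j)\,\widetilde{A}_j\Big\|\\
&\le \|y\|_1\,\varepsilon_{\text{BE}} + \alpha\sum_j |y_j-\beta c_j^*d_j|\\
&\le \beta\,\varepsilon_{\text{BE}} + \alpha\,\varepsilon_{\text{SP}},
\end{align*}
using $\|y\|_1\le\beta$ and the state-preparation-pair bound $\sum_j|y_j-\beta c_j^*d_j|\le \varepsilon_{\text{SP}}$. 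This is exactly the claim that $\widetilde{W}$ is an $(\alpha\beta,\,a+b,\,\alpha\varepsilon_{\text{SP}}+\beta\varepsilon_{\text{BE}})$-block-encoding of $A$.

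There is no deep obstacle here; the main thing to get right is careful bookkeeping of the tensor-product ordering (the $b$-register on which $P_L,P_R$ act versus the $a$-register on which the BE ancillas of $U_j$ live) so that the controlled-$U_j$ structure of $W$ cleanly picks out the diagonal $j=k$ terms. The only slightly nonobvious estimate is $\|\widetilde{A}_j\|\le\alpha$, which is immediate because $U_j$ is unitary and the ancilla projectors are contractions; this is what allows us to absorb the state-preparation error with the factor $\alpha$ rather than $\alpha+\varepsilon_{\text{BE}}$, matching the stated error bound exactly.
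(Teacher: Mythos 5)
Your proof is correct and follows essentially the same route as the paper. The paper itself cites Lemma 52 of \cite{gilyen2019quantum} for this proposition and does not reprove it, but its proof of the generalized Proposition \ref{Proposition:General_linear_combination_BE} (``adapted from the proof of Lemma 52'') is precisely the block-extraction-plus-triangle-inequality argument you give, with your proof being the specialization to constant $\alpha_j=\alpha$: the paper inserts $\sum_j \alpha_j y_j (\bra{0^a}\otimes I_n)U_j(\ket{0^a}\otimes I_n)$ and bounds the unitary block by $1$, while you insert $\sum_j y_j\widetilde{A}_j$ and bound $\|\widetilde{A}_j\|\le\alpha$, which are the same step written in two equivalent normalizations.
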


In Proposition \ref{Proposition:GSLW_linear_combination_BE}, the subnormalization factors of the $A_j$'s are restricted to be the same. Later on, we will need a slight generalization of the above result whereby this requirement is dropped.

\begin{proposition}[Generalized linear combination of block-encoded matrices]\label{Proposition:General_linear_combination_BE}
Let 
\begin{enumerate}[i.]
\item $A_j,\; j=0,\dots,m-1$ be $n$-qubit operators with respective ($\alpha_j,a,\varepsilon_{\text{BE}}$)-BEs $U_j$ for $\alpha := (\alpha_0,\dots,\alpha_{m-1}) \in \mathbb{C}^m$,
\item $A = \sum_{j=0}^{m-1} y_jA_j$ for $y := (y_0,\dots,y_{m-1}) \in \mathbb{C}^m$,
\item $(P_L,P_R)$ be a $(\beta,b,\varepsilon_{\text{SP}})$-state-preparation-pair for $\alpha \odot y$.
\end{enumerate}
Then there exists a $(\beta,a+b,\frac{\beta}{\inf_j \alpha_j}\varepsilon_{\text{BE}}+\varepsilon_{\text{SP}})$-BE of $A$, given by
\[
\widetilde{W} = (P_L^\dag \otimes I_a \otimes I_n)W(P_R \otimes I_a \otimes I_n),
\]
where
\[
W = \sum_{j=0}^{m-1} \ket{j}\bra{j} \otimes U_j + \sum_{j=m}^{2^b-1} \ket{j}\bra{j} \otimes I_a \otimes I_n
\]
is a $(n+a+b)$-qubit unitary.
\end{proposition}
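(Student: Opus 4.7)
The plan is to mirror the proof of Proposition \ref{Proposition:GSLW_linear_combination_BE} but to track the individual subnormalization factors $\alpha_j$ more carefully. The core insight is that the generalization absorbs the differing $\alpha_j$'s into the coefficient vector by demanding a state-preparation-pair for $\alpha \odot y$ rather than $y$ itself. Writing $\tilde A_j := (\bra{0^a}\otimes I_n) U_j (\ket{0^a}\otimes I_n)$, the BE assumption gives $\|A_j - \alpha_j \tilde A_j\| \leq \varepsilon_{\text{BE}}$ and $\|\tilde A_j\| \leq 1$.

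First, I would compute the matrix that $\widetilde{W}$ block-encodes. Using $P_L\ket{0^b} = \sum_j c_j\ket{j}$ and $P_R\ket{0^b} = \sum_j d_j\ket{j}$, and the fact that $W$ applies $U_j$ conditioned on the first register being $\ket{j}$ for $j<m$ and the identity otherwise, a direct expansion together with the property $c_j^\ast d_j = 0$ for $j \geq m$ yields
\begin{equation*}
(\bra{0^{a+b}}\otimes I_n)\,\widetilde{W}\,(\ket{0^{a+b}}\otimes I_n) \;=\; \sum_{j=0}^{m-1} c_j^\ast d_j\, \tilde A_j.
\end{equation*}
The claim is then that $\beta$ times this quantity approximates $A = \sum_j y_j A_j$ within the stated error.

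Next, I would bound the error by the triangle inequality after adding and subtracting the mixed term $\sum_j \alpha_j y_j \tilde A_j$:
\begin{equation*}
A - \beta\sum_{j} c_j^\ast d_j\, \tilde A_j \;=\; \sum_j (\alpha_j y_j - \beta c_j^\ast d_j)\, \tilde A_j \;+\; \sum_j y_j\,(A_j - \alpha_j \tilde A_j).
\end{equation*}
Taking norms and using $\|\tilde A_j\|\leq 1$, the first sum is bounded by $\sum_j |\alpha_j y_j - \beta c_j^\ast d_j| \leq \varepsilon_{\text{SP}}$ straight from Definition \ref{Definition:State_preparation_pair} applied to $\alpha \odot y$. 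For the second, the BE property gives $\sum_j |y_j|\,\varepsilon_{\text{BE}}$, and I would convert $\sum_j |y_j|$ into the advertised bound via $|y_j| \leq \alpha_j|y_j|/\inf_k \alpha_k$ combined with $\|\alpha\odot y\|_1 \leq \beta$, yielding $\sum_j |y_j| \leq \beta/\inf_j \alpha_j$.

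I do not expect any serious obstacle: the algebra is essentially the same as in Lemma 52 of \cite{gilyen2019quantum}, and the only new ingredient is the rescaling trick that justifies why the state-preparation-pair must be for $\alpha \odot y$. The point slightly worth being careful about is ensuring that $\inf_j \alpha_j > 0$, which is implicit in a well-defined BE, and verifying that the bound $\|\alpha\odot y\|_1 \leq \beta$ (part of the state-preparation-pair definition) is what allows the conversion in the final step. Putting these pieces together gives an $(a+b)$-ancilla BE of $A$ with subnormalization $\beta$ and error $\frac{\beta}{\inf_j \alpha_j}\varepsilon_{\text{BE}} + \varepsilon_{\text{SP}}$, as claimed.
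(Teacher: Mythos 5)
Your proof is correct and follows essentially the same route as the paper's: extract the block of $\widetilde{W}$ to get $\sum_j c_j^*d_j \tilde A_j$, add and subtract $\sum_j \alpha_j y_j \tilde A_j$, apply the triangle inequality together with $\|\tilde A_j\|\leq 1$ to isolate the $\varepsilon_{\text{SP}}$ and $\sum_j|y_j|\varepsilon_{\text{BE}}$ contributions, and convert $\sum_j |y_j| \leq \beta/\inf_j\alpha_j$ via $\|\alpha\odot y\|_1\leq\beta$. The only cosmetic difference is that you group the decomposition before taking norms, whereas the paper performs the same manipulation inside a chain of inequalities; the argument is the same.
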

\begin{proof}
The following is adapted from the proof of Lemma 52, \cite{gilyen2019quantum}. By definition of state-preparation pairs (see Definition \ref{Definition:State_preparation_pair}), $P_L\ket{0^b} = \sum_{j=0}^{2^b-1} c_j\ket{j}$ and $P_R\ket{0^b} = \sum_{j=0}^{2^b-1} d_j\ket{j}$ such that $\sum_{j=0}^{m-1}|\alpha_j y_j - \beta c_j^*d_j| \leq \varepsilon_{\text{SP}}$. First we evaluate the block extraction of $\widetilde{W}$. We have

\onecolumngrid
\begin{align*}
&\quad\;(\bra{0^{b+a}}\otimes I_n)\widetilde{W}(\ket{0^{b+a}}\otimes I_n)\\
&= (\bra{0^{b+a}}\otimes I_n)(P_L^\dag \otimes I_a \otimes I_n)\left( \sum_{j=0}^{m-1} \ket{j}\bra{j} \otimes U_j + \sum_{j=m}^{2^b-1} \ket{j}\bra{j} \otimes I_a \otimes I_n \right)(P_R \otimes I_a \otimes I_n)(\ket{0^{b+a}}\otimes I_n)\\
&= \sum_{j=0}^{m-1} \bra{0^b}P_L^\dag \ket{j}\bra{j} P_R\ket{0^b} \cdot (\bra{0^a}\otimes I_n)U_j(\ket{0^a}\otimes I_n)\\
&= \sum_{j=0}^{m-1} c_j^*d_j \cdot (\bra{0^a}\otimes I_n)U_j(\ket{0^a}\otimes I_n).
\end{align*}

In going from the first equality to the second, we have made use of the fact that for state preparation pairs $c_j^*d_j=0$ for $j=m,\dots,2^b-1$. The second summand in $W$ is thus annihilated.
Therefore,
\begin{eqnarray}
\left\| A - \beta (\bra{0^{b+a}}\otimes I_n)\widetilde{W}(\ket{0^{b+a}}\otimes I_n) \right\| &=& \left\| A - \sum_{j=0}^{m-1} (\beta c_j^*d_j - \alpha_jy_j + \alpha_jy_j) 
\cdot (\bra{0^a}\otimes I_n)U_j(\ket{0^a}\otimes I_n) \right\| \\\nonumber
&\leq& \sum_{j=0}^{m-1} |\beta c_j^*d_j - \alpha_jy_j| + \left\| A - \sum_{j=0}^{m-1} \alpha_jy_j(\bra{0^a}\otimes I_n)U_j(\ket{0^a}\otimes I_n) \right\|\\\nonumber
&\leq& \varepsilon_{\text{SP}} + \left\| \sum_{j=0}^{m-1} y_jA_j - \sum_{j=0}^{m-1} y_j\alpha_j(\bra{0^a}\otimes I_n)U_j(\ket{0^a}\otimes I_n) \right\| \\\nonumber
&\leq& \varepsilon_{\text{SP}} + \sum_{j=0}^{m-1} |y_j| \left\| A_j - \alpha_j (\bra{0^a}\otimes I_n)U_j(\ket{0^a}\otimes I_n) \right\|\\\nonumber
&\leq& \varepsilon_{\text{SP}} + \sum_{j=0}^{m-1} |y_j| \varepsilon_{\text{BE}} \leq \varepsilon_{\text{SP}} + \frac{\beta}{\inf_j \alpha_j}\varepsilon_{\text{BE}}.
\end{eqnarray}
where the last inequality was obtained using $\beta \geq \sum_{j=0}^{m-1}|\alpha_j y_j| \geq \sum_{j=0}^{m-1}(\inf_k \alpha_k)|y_j|$.
\end{proof}

\begin{remark}
In the special case where the block-encodings of the $A_j$'s have the same subnormalization factors, i.e., $\alpha_j=\alpha$ for all $j$, we recover Proposition \ref{Proposition:GSLW_linear_combination_BE} from Proposition \ref{Proposition:General_linear_combination_BE} . To see this, observe that if $(P_L,P_R)$ is a $(\beta,b,\varepsilon_{\text{SP}})$-state-preparation-pair for $\alpha \odot y$, then $\sum_j |\alpha_jy_j - \beta c_j^*d_j| \leq \varepsilon_{\text{SP}} \implies \sum_j |\alpha y_j - \beta c_j^*d_j| \leq \varepsilon_{\text{SP}} \implies \sum_j |y_j - \frac{\beta}{\alpha} c_j^*d_j| \leq \frac{\varepsilon_{\text{SP}}}{\alpha}$, thus implying $(P_L,P_R)$ is a $(\frac{\beta}{\alpha},b,\frac{\varepsilon_{\text{SP}}}{\alpha})$-state-preparation-pair for $y$. According to Proposition \ref{Proposition:GSLW_linear_combination_BE}, $\widetilde{W}$ is then a $(\alpha \cdot \frac{\beta}{\alpha},\;a+b,\;\alpha \cdot \frac{\varepsilon_{\text{SP}}}{\alpha} + \frac{\beta}{\alpha}\varepsilon_{\text{BE}})$-BE of $A$. This is in agreement with Proposition \ref{Proposition:General_linear_combination_BE}.
\end{remark}

We now arrive at a milestone within the QSVT framework. Namely, the ability to implement block-encodings of polynomials of a matrix from a given block-encoding of the matrix. In many applications however, the functions of interest are not polynomials. In such cases, one has to first approximate the desired function by a polynomial in order to apply QSVT/QET.

\begin{theorem}[Polynomial Eigenvalue Transformation -- Theorem 56, \cite{gilyen2019quantum}]\label{Theorem:qet_of_hermitian_matrices}
Let $U$ be an $(\alpha,a,\varepsilon)$-encoding of a Hermitian matrix $A$ (equivalently, a $(1,a,\varepsilon/\alpha)$-encoding of $A/\alpha$) and $P \in \mathbb{R}[x]$ be a degree-$d$ polynomial satisfying $|P(x)| \leq \frac{1}{2}$ on $[-1,1]$.
Then, one can construct a quantum circuit $\tilde{U}$ which is a $(1,a+2,4d\sqrt{\varepsilon/\alpha})$-encoding of $P(A/\alpha)$. $\tilde{U}$ consists of $d$ $U$ and $U^\dag$ gates, one controlled-$U$, and $\mathcal{O}((a+1)d)$ other one- and two-qubit gates.
\end{theorem}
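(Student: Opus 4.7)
The plan is to invoke the framework of qubitization and quantum signal processing (QSP). First, by the rescaling observation in Remark~\ref{Remark:Block_encoding}, $U$ can be viewed as a $(1,a,\varepsilon/\alpha)$-BE of $A/\alpha$, so it suffices to construct a $(1,a+2,4d\sqrt{\varepsilon/\alpha})$-BE of $P(A/\alpha)$ from a $(1,a,\varepsilon/\alpha)$-BE of the Hermitian matrix $A/\alpha$ (which satisfies $\|A/\alpha\|\leq 1$). Let $\tilde A := (\bra{0^a}\otimes I_n)\,U\,(\ket{0^a}\otimes I_n)$ be the exact block of $U$; then $\|\tilde A\|\leq 1$ and $\|A/\alpha - \tilde A\|\leq \varepsilon/\alpha$ by the block-encoding hypothesis.

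Second, I would construct the output circuit as an alternating-phase product of $U$ and $U^\dag$. Let $\Pi := 2(\ket{0^a}\bra{0^a}\otimes I_n) - I_{n+a}$ be the reflection about the block subspace; each phased reflection $e^{i\phi\Pi}$ can be implemented with $O(a)$ one- and two-qubit gates via standard multi-controlled constructions. The fundamental theorem of QSP (see \cite{gilyen2019quantum,martyn2021grand}) guarantees that for any real polynomial $P$ of degree $d$ with $|P|\leq 1/2$ on $[-1,1]$, there exist real phase factors $\phi_0,\dots,\phi_d$ (depending only on $P$) such that interleaving $d$ applications of $U/U^\dag$ with the reflections $e^{i\phi_k\Pi}$, together with one controlled-$U$ and a single-qubit basis change that separates even/odd parity components (the second ancilla qubit), produces a unitary $\tilde U$ that is a $(1,a+2,0)$-BE of $P(\tilde A)$. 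Qubitization is the workhorse: $U$ lifts to a walk operator that acts as a rotation by $\arccos(\tilde\lambda)$ on each invariant two-dimensional block associated with an eigenvalue $\tilde\lambda$ of $\tilde A$, and the QSP phase factors sculpt these rotations into $P(\tilde\lambda)$. The stated gate counts follow directly from this circuit structure.

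The remaining task, and the main obstacle, is to bound
\begin{align*}
\bigl\|P(A/\alpha) - (\bra{0^{a+2}}\otimes I_n)\,\tilde U\,(\ket{0^{a+2}}\otimes I_n)\bigr\| \;=\; \|P(A/\alpha) - P(\tilde A)\|
\end{align*}
in terms of $\|A/\alpha - \tilde A\|\leq \varepsilon/\alpha$. A naive Lipschitz estimate via Markov's inequality on $P'$ would yield only $O(d^2\varepsilon/\alpha)$. The sharper $O(d\sqrt{\varepsilon/\alpha})$ bound instead follows from a stability lemma proved in \cite{gilyen2019quantum}: for Hermitian operators $A,\tilde A$ of spectral norm at most $1$ and any real polynomial $P$ of degree $d$ with $|P|\leq 1/2$ on $[-1,1]$, one has $\|P(A)-P(\tilde A)\|\leq 4d\sqrt{\|A-\tilde A\|}$. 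The square-root improvement arises from a Chebyshev-expansion argument exploiting the identity $T_k(\cos\theta)=\cos(k\theta)$ together with the H\"older continuity of $\arccos$ near the endpoints of $[-1,1]$, which tames the behavior of the walk operator under perturbation of its defining block. Substituting $\|A/\alpha-\tilde A\|\leq \varepsilon/\alpha$ into this inequality then delivers the claimed error bound $4d\sqrt{\varepsilon/\alpha}$.
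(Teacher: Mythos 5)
The paper does not prove this statement: it is quoted verbatim as Theorem~56 of \cite{gilyen2019quantum} and used as a black box, so there is no internal proof to compare against. Your sketch is directionally faithful to Gily\'en--Su--Low--Wiebe's argument --- the rescaling observation, the alternating phased-reflection (QSP/qubitization) circuit, and above all the observation that the $\sqrt{\varepsilon/\alpha}$ scaling comes from a robustness lemma for singular-value transformation rather than from a naive Lipschitz/Markov bound are exactly the right ingredients, and the Chebyshev/$\arccos$ heuristic you give for the square-root is the right intuition.

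Two places are glossed over in a way that would matter if this were meant to be a self-contained proof. First, the exact block $\tilde A := (\bra{0^a}\otimes I_n)U(\ket{0^a}\otimes I_n)$ of an approximate block-encoding of a Hermitian $A$ need \emph{not} itself be Hermitian, so it is not automatic that QSVT applied to $U$ implements the matrix function $P(\tilde A)$, nor that a ``robustness lemma for Hermitian operators'' applies to the pair $(A/\alpha,\tilde A)$. In the GSLW construction the single controlled-$U$ is spent precisely on Hermitianizing/symmetrizing the encoded block (together with the real-part extraction), and this is logically distinct from the LCU step that recombines even- and odd-parity pieces (which is where the other extra ancilla and the hypothesis $|P|\le 1/2$ are used, so that $2P_{\mathrm{even}},2P_{\mathrm{odd}}$ are still bounded by~$1$). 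Your phrase ``one controlled-$U$ and a single-qubit basis change that separates even/odd parity components'' conflates these two separate mechanisms, and as written doesn't account for both of the $+2$ ancillas. Second, the robustness statement you invoke is really a bound on $\|P^{(\mathrm{SV})}(\cdot)-P^{(\mathrm{SV})}(\cdot)\|$ for not-necessarily-Hermitian arguments of norm $\le 1$, which is what makes it applicable here; stating it as a Hermitian matrix-function perturbation bound quietly uses Hermiticity of $\tilde A$ that you have not established. Neither of these sinks the sketch --- they are exactly the technicalities the cited theorem is designed to absorb --- but they are the places where ``one can construct'' hides real work.
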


\begin{proposition}[Bounded Polynomial Approximation -- Corollary 66, \cite{gilyen2019quantum}]\label{Proposition:gslw_polynomialapproximation}
Let $x_0 \in [-1,1]$, $r \in (0,2]$, $\delta \in (0,r]$ and let $f: [x_0-r-\delta, x_0+r+\delta] \longrightarrow \mathbb{C}$ be such that $f(x) = \sum_{l=0}^\infty a_l(x-x_0)^l$ for all $x \in [x_0-r-\delta,x_0+r+\delta]$. Suppose $B>0$ is such that $\sum_{l=0}^\infty (r+\delta)^l|a_l| \leq B$. Let $\varepsilon \in (0,\frac{1}{2B}]$, then there is an efficiently computable polynomial $P \in \mathbb{C}[x]$ of degree $\mathcal{O}\left( \frac{1}{\delta} \log \left( \frac{B}{\varepsilon} \right) \right)$ such that
\begin{align}
&\|f(x)-P(x)\|_{[x_0-r,x_0+r]} \leq \varepsilon\\
&\|P(x)\|_{[-1,1]} \leq \varepsilon+\|f(x)\|_{[x_0-r-\delta/2,x_0+r+\delta/2]} \leq \varepsilon+B\\
&\|P(x)\|_{[-1,1]\setminus[x_0-r-\delta/2,x_0+r+\delta/2]} \leq \varepsilon.
\end{align}
If we choose $B$ sufficiently large such that $\frac{1}{2B} < 1$, then we also have an $\varepsilon$-independent bound on $P(x)$: $\|P(x)\|_{[-1,1]} \leq 1+B$.
\end{proposition}

Theorem \ref{Theorem:qet_of_hermitian_matrices} and Proposition \ref{Proposition:gslw_polynomialapproximation} are to be used in conjunction to produce block-encodings of general functions of Hermitian matrices. In doing so, we first note that Theorem \ref{Theorem:qet_of_hermitian_matrices} produces an encoding of $P(A/\alpha)$, not $P(A)$. Thus, with a polynomial approximation of $f$, say $P(x) \approx f(x)$, it is generally not true that $P(A/\alpha) \approx f(A)$. What we need is a polynomial approximation not of $f$, but of a (horizontally) scaled version of $f$, 
$f'(x) := f(\alpha x)$, so that $P(x) \approx f'(x) \implies P(A/\alpha) \approx f'(A/\alpha) = f(A)$. Second, we also have to take into account the polynomial approximation error incurred in producing the final desired block encoding $f(A)$. 
We take care of these matters in Corollary \ref{Corollary:BE_of_hermitian_matrices}, which, given the block-encoding of an arbitrary Hermitian matrix $A$, produces a block-encoding of $f(A)$, where $f$ is a generic real-valued function.

\begin{corollary}[Block-encoding functions of general Hermitian matrices]\label{Corollary:BE_of_hermitian_matrices}
Given
\begin{enumerate}[i.]
\item A Hermitian matrix $\lambda_{\min} \leq A \leq \lambda_{\max}$, $-\infty < \lambda_{\min} < \lambda_{\max} < \infty$ and $U$, an $(\alpha,a,\varepsilon)$-encoding of $A$.
\item $f: I \longrightarrow \mathbb{R}$, a smooth function on an open interval $I$ containing  $[\lambda_{\min},\lambda_{\max}]$. Assume the function $x \mapsto f(\alpha x)$ satisfies the conditions in Proposition \ref{Proposition:gslw_polynomialapproximation} with $[\lambda_{\min}/\alpha,\lambda_{\max}/\alpha] \subseteq [x_0-r,x_0+r]$ and series-of-coefficients bound $B$.
\item Polynomial approximation error tolerance for $f$: $\varepsilon_\text{poly} \in (0,\frac{1}{2}]$.
\end{enumerate}
Then there exists a quantum circuit $U_f$ which is a $\left( 2(1+B),\; a+2,\; \varepsilon_\text{poly} + 2(1+B)(4d\sqrt{\varepsilon/\alpha}) \right)$-encoding of $f(A)$. The construction of $U_f$ makes $d = \mathcal{O}\left( \frac{1}{\delta} \log \frac{B}{\varepsilon_\text{poly}} \right)$ queries to $U$.
\end{corollary}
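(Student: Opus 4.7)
The plan is to chain together Proposition \ref{Proposition:gslw_polynomialapproximation} (polynomial approximation) with Theorem \ref{Theorem:qet_of_hermitian_matrices} (QET on the resulting polynomial), while carefully managing two issues: (a) Theorem \ref{Theorem:qet_of_hermitian_matrices} produces an encoding of $P(A/\alpha)$ rather than $P(A)$, so we must approximate the \emph{rescaled} function $g(x):=f(\alpha x)$ instead of $f$ itself; (b) Theorem \ref{Theorem:qet_of_hermitian_matrices} requires the polynomial to be bounded by $1/2$ on $[-1,1]$, whereas Proposition \ref{Proposition:gslw_polynomialapproximation} only guarantees a $\|P\|_{[-1,1]} \le 1+B$ bound, so we must rescale the polynomial by $1/(2(1+B))$ and pay for this rescaling in the subnormalization factor of the final block-encoding.

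Concretely, I would first invoke Proposition \ref{Proposition:gslw_polynomialapproximation} applied to the function $g(x) = f(\alpha x)$ on the interval $[x_0-r,x_0+r]\supseteq[\lambda_{\min}/\alpha,\lambda_{\max}/\alpha]$ with error tolerance $\varepsilon_{\text{poly}}$. This produces a polynomial $P$ of degree $d=\mathcal{O}\bigl(\tfrac{1}{\delta}\log(B/\varepsilon_{\text{poly}})\bigr)$ satisfying $\|g-P\|_{[x_0-r,x_0+r]}\le\varepsilon_{\text{poly}}$ and $\|P\|_{[-1,1]}\le 1+B$. Define the rescaled polynomial $\tilde P(x):=P(x)/(2(1+B))$, which satisfies $|\tilde P(x)|\le 1/2$ on $[-1,1]$ as required by Theorem \ref{Theorem:qet_of_hermitian_matrices}.

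Next, apply Theorem \ref{Theorem:qet_of_hermitian_matrices} with input $U$ (an $(\alpha,a,\varepsilon)$-BE of $A$) and polynomial $\tilde P$. This produces a circuit $U_f$ that is a $(1,a+2,4d\sqrt{\varepsilon/\alpha})$-BE of $\tilde P(A/\alpha)$. By the equivalence noted in Remark \ref{Remark:Block_encoding}, the same $U_f$ is a $(2(1+B),\,a+2,\,2(1+B)\cdot 4d\sqrt{\varepsilon/\alpha})$-BE of $2(1+B)\tilde P(A/\alpha)=P(A/\alpha)$. Finally, use the triangle inequality to replace $P(A/\alpha)$ by $f(A)=g(A/\alpha)$: since every eigenvalue of $A/\alpha$ lies in $[\lambda_{\min}/\alpha,\lambda_{\max}/\alpha]\subseteq [x_0-r,x_0+r]$, functional calculus gives $\|g(A/\alpha)-P(A/\alpha)\|\le\|g-P\|_{[x_0-r,x_0+r]}\le\varepsilon_{\text{poly}}$. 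Adding this to the existing BE error yields a total error $\varepsilon_{\text{poly}}+2(1+B)\cdot 4d\sqrt{\varepsilon/\alpha}$, which is exactly the claimed bound. The query count $d$ to $U$ comes directly from Theorem \ref{Theorem:qet_of_hermitian_matrices}.

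The only genuinely delicate point is bookkeeping: keeping straight which errors live on which interval, ensuring the horizontal rescaling $f\mapsto g$ is done \emph{before} the polynomial approximation (so that the resulting $P(A/\alpha)$ actually approximates $f(A)$ rather than $f(A/\alpha)$), and absorbing the $2(1+B)$ rescaling into the subnormalization factor rather than into the error. Once these are arranged correctly, the proof is a direct composition of the two cited results.
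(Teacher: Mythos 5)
Your proposal follows essentially the same route as the paper's proof: horizontally rescale $f$ to $g(x)=f(\alpha x)$, invoke the polynomial approximation result to get $P$ with degree $\mathcal{O}\bigl(\frac{1}{\delta}\log(B/\varepsilon_{\text{poly}})\bigr)$, rescale by $1/(2(1+B))$ to meet the $1/2$ bound, apply QET, and close with a triangle inequality.

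There is one step you skip that the paper handles explicitly and that is genuinely needed: Proposition \ref{Proposition:gslw_polynomialapproximation} produces a polynomial $P \in \mathbb{C}[x]$, i.e.\ one with possibly complex coefficients, whereas Theorem \ref{Theorem:qet_of_hermitian_matrices} requires a real polynomial $P\in\mathbb{R}[x]$. You cannot feed $\tilde P = P/(2(1+B))$ directly into the QET theorem without first ensuring it is real. The fix is to replace $P$ with $\re P$ before rescaling: since $f$ (hence $g$) is real-valued, one has $|g(x)-\re P(x)| = |\re(g(x)-P(x))| \le |g(x)-P(x)|$, so $\re P$ inherits both the $\varepsilon_{\text{poly}}$ approximation bound on $[x_0-r,x_0+r]$ and the bound $\|\re P\|_{[-1,1]}\le\|P\|_{[-1,1]}\le 1+B$. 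With that correction, your argument is complete and matches the paper's.
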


\begin{proof}
First, $\alpha \geq \|A\| = \max \{|\lambda_{\min}|,|\lambda_{\max}|\}$. Define the scaling map $t_\alpha : x \mapsto x/\alpha$, so that under this map $[\lambda_{\min},\lambda_{\max}] \mapsto [\lambda_{\min}/\alpha,\lambda_{\max}/\alpha]$. By assumption on $f$ there exists $x_0 \in [-1,1]$, $r \in (0,2]$, $\delta \in (0,r]$ such that (i.) $[\lambda_{\min}/\alpha,\lambda_{\max}/\alpha] \subseteq [x_0-r,x_0+r]$, (ii.) $f \circ t_\alpha^{-1}(x) = \sum_{l=0}^\infty a_l(x-x_0)^l$ on $[x_0-r-\delta,x_0+r+\delta]$ and (iii.) $\sum_{l=0}^\infty (r+\delta)^l|a_l| \leq B$ for some $B>0$.

By Proposition \ref{Proposition:gslw_polynomialapproximation}, given polynomial approximation error tolerance $\varepsilon_\text{poly}$ there exists a polynomial $Q \in \mathbb{C}[x]$ of degree $\mathcal{O}\left( \frac{1}{\delta} \log \left( \frac{B}{\varepsilon_\text{poly}} \right) \right)$ which $\varepsilon_\text{poly}$-approximates $f \circ t_\alpha^{-1}$ on $[x_0-r,x_0+r]$ and is bounded above by $1+B$ on $[-1,1]$. Since $\|A/\alpha\| \in [\lambda_{\min}/\alpha,\lambda_{\max}/\alpha] \subseteq [x_0-r,x_0+r] $, we have
\begin{align*}
\left\|f \circ t_\alpha^{-1}\left(\frac{A}{\alpha}\right) - Q\left(\frac{A}{\alpha}\right)\right\| \leq \|f \circ t_\alpha^{-1}(x) - Q(x)\|_{[x_0-r,x_0+r]} \leq \varepsilon_\text{poly}.
\end{align*}
In order to apply Theorem \ref{Theorem:qet_of_hermitian_matrices}, our polynomial has to be real and upper-bounded by $1/2$ on $[-1,1]$. Observe that for any complex-valued function $F$ and domain $S$, 
\begin{align*}
\|F\|_S = \sup_{x \in S} |F(x)| = \sup_{x \in S} \sqrt{(\re F(x))^2 + (\im F(x))^2} \geq \sup_{x \in S} |\re F(x)| = \|\re F\|_S.
\end{align*}
Since $f$ itself is real-valued, $\re Q \in \mathbb{R}[x]$ is qualified to assume the role of $P$ in Proposition \ref{Proposition:gslw_polynomialapproximation}. That is, the real polynomial $\re Q$ also $\varepsilon_\text{poly}$-approximates $f \circ t_\alpha^{-1}$ on $[x_0-r,x_0+r]$ and is bounded above by $1+B$ on $[-1,1]$. Thus, letting $P \leftarrow \frac{\re Q}{2(1+B)}$ in Theorem \ref{Theorem:qet_of_hermitian_matrices} we obtain $\tilde{U}$, a $(1,a+2,4d\sqrt{\varepsilon/\alpha})$-encoding of $\frac{\re Q}{2(1+B)}(A/\alpha)$, where $d = \mathcal{O}\left( \frac{1}{\delta} \log \left( \frac{B}{\varepsilon_\text{poly}} \right) \right)$. Putting these together and noting that $f \circ t_\alpha^{-1}(\frac{A}{\alpha}) = f(A)$, we have
\begin{align*}
\left\|\frac{f(A)}{2(1+B)} - (\bra{0^{a+2}}\otimes I)\tilde{U}(\ket{0^{a+2}}\otimes I)\right\| &\leq \left\| \frac{f \circ t_\alpha^{-1}(\frac{A}{\alpha})}{2(1+B)}-\frac{\re Q(\frac{A}{\alpha})}{2(1+B)} \right\| +\left\| \frac{\re Q(\frac{A}{\alpha})}{2(1+B)} - (\bra{0^{a+2}}\otimes I)\tilde{U}(\ket{0^{a+2}}\otimes I) \right\|\\
&\leq \frac{\varepsilon_\text{poly}}{2(1+B)} + 4d\sqrt{\varepsilon/\alpha}.
\end{align*}
Thus, choosing $U_f = \tilde{U}$ gives us a $\left( 2(1+B),\; a+2,\; \varepsilon_\text{poly} + 2(1+B)(4d\sqrt{\varepsilon/\alpha}) \right)$-encoding of $f(A)$.
\end{proof}

\end{appendix}

\end{document}